\crefname{equation}{}{}
\crefname{algocf}{Algorithm}{Algorithms}
\crefname{equation}{}{} %remove ``Equation''
\colorlet{refkey}{orange!20}
\colorlet{labelkey}{blue!30}
\crefname{algocf}{Algorithm}{Algorithms}
\numberwithin{equation}{section}
\newtheorem{theorem}{Theorem}[section]
\newtheorem{proposition}[theorem]{Proposition}
\newtheorem{lemma}[theorem]{Lemma}
\crefname{claim}{Claim}{Claims}
\newtheorem{corollary}[theorem]{Corollary}
\newtheorem*{question*}{Question}
\theoremstyle{definition}
\newtheorem*{definition*}{Definition}
\theoremstyle{remark}
\newtheorem*{remark}{Remark}
\newcommand{\snorm}[1]{\lVert#1\rVert}
\newcommand{\sang}[1]{\langle #1 \rangle}
\newcommand{\mb}{\mathbb}
\newcommand{\mbm}{\mathbbm}
\newcommand{\mc}{\mathcal}
\newcommand{\on}{\operatorname}
\newcommand{\assign}{\leftarrow}
\renewcommand{\epsilon}{\varepsilon}
\renewcommand{\tilde}{\widetilde}
\title{Spencer's Theorem in Nearly Input-Sparsity Time}
\author[A1]{Vishesh Jain}
\address{Department of Statistics, Stanford University, Stanford, CA 94305, USA}
\email{visheshj@stanford.edu}
\author[A2]{Ashwin Sah}
\author[A3]{Mehtaab Sawhney}
\address{Department of Mathematics, Massachusetts Institute of Technology, Cambridge, MA 02139, USA}
\email{\{asah,msawhney\}@mit.edu}
\thanks{Sah and Sawhney were supported by NSF Graduate Research Fellowship Program DGE-1745302. Sah was supported by the PD Soros Fellowship.}
\begin{document}

\maketitle
\begin{abstract}
A celebrated theorem of Spencer states that for every set system $S_1,\dots, S_m \subseteq [n]$, there is a coloring of the ground set with $\{\pm 1\}$ with discrepancy $O(\sqrt{n\log(m/n+2)})$. We provide an algorithm to find such a coloring in near input-sparsity time $\tilde{O}(n+\sum_{i=1}^{m}|S_i|)$. A key ingredient in our work, which may be of independent interest, is a novel width reduction technique for solving linear programs, not of covering/packing type, in near input-sparsity time using the multiplicative weights update method. 
\end{abstract}

\section{Introduction}

The celebrated `Six standard deviations suffice' theorem of Spencer \cite{Spe85} says that there is a universal constant $C > 0$ \footnote{In our notation, Spencer showed that $C < 6/\sqrt{\log{3}}$ suffices (when $m=n$), hence the name of the result.} such that given a set system $S_1,\dots, S_{m} \subseteq [n] := \{1,\dots, n\}$, there exists a bi-coloring $x := (x_1,\dots, x_n) \in \{\pm 1\}^{n}$ for which
\begin{equation}
\label{eq:spencer}
\max_{i \in [m]}\left|\sum_{j \in S_i}x_j\right| \leq C\sqrt{n\log(m/n + 2)}. 
\end{equation}
The strength of this result is most readily apparent when $m = n$, in which case the right hand side of \cref{eq:spencer} is $O(\sqrt{n})$ \footnote{Hadamard set systems show that this is optimal up to the implicit constant, see \cite[Theorem~13.4.1]{AS16}.} whereas a basic application of the probabilistic method only gives the weaker estimate $O(\sqrt{n\log{n}})$.

Spencer's proof of this result was non-algorithmic, based on the partial coloring technique of Beck \cite{Bec81}. A different (but still, non-algorithmic) proof, based on convex geometry, was obtained independently by Gluskin \cite{Glu88}. The problem of finding an $x \in \{\pm 1\}^{n}$, achieving the guarantee of \cref{eq:spencer}, in (probabilistic) polynomial time was first solved in a breakthrough work of Bansal \cite{Ban10}, by using a random walk guided by the solution to a semi-definite program (SDP) \footnote{Bansal's algorithm admits a derandomization. Another deterministic algorithm for Spencer's theorem was provided by Levy, Ramadas, and Rothvoss \cite{LRR17}.}. A much simpler random-walk based approach was later found by Lovett and Meka \cite{LM15}. Subsequently, Rothvoss \cite{Rot17} and Eldan and Singh \cite{ES18} devised randomized polynomial-time algorithms, based on convex geometry, which are also applicable to a generalization of Spencer's result due to Giannopoulos \cite{Gia97}. Without resorting to fast matrix multiplication (and restricting ourselves here to $m = \Theta(n)$ for ease of presentation), the fastest of these algorithms are those of Lovett--Meka and Eldan--Singh, both running in time $\tilde{O}(n^{3})$, where $\tilde{O}$ hides polylogarithmic factors in $n$. Allowing fast matrix multiplication, the running time of the algorithm of Eldan--Singh is dominated by the time to solve (to sufficient accuracy) a linear program with $n$ variables and $\Theta(n)$ constraints, for which the best-known bound is $\tilde{O}(n^{\omega})$ (where $\omega \approx 2.37$ is the \emph{current} best matrix-multiplication exponent \cite{AW21}) and $\tilde{O}(n^{2+1/18})$ even in the most optimistic case that the matrix-multiplication exponent is $2$ and the dual matrix-multiplication exponent is $1$ \cite{JSWZ20}. 

In the Workshop on Discrepancy Theory and Integer Programming in 2018 \cite{DadOpen}, devising discrepancy minimization algorithms running in near input-sparsity time was suggested as one of the main directions for future research, noting that such algorithms are not known for any of the major results in discrepancy theory, including Spencer's theorem, the Beck--Fiala theorem \cite{BF81}, or Banaszczyk's theorem \cite{Ban98}. In fact, all known algorithms for these problems (which produce a coloring with discrepancy at most a universal constant factor more than the best-known existential bounds) employ linear algebraic primitives such as solving a system of linear equations, which suffer from the matrix multiplication bottleneck. We note here that there are recent online algorithms for discrepancy minimization \cite{ALS21, LSS22} which do not use any such operations and run in input-sparsity time. However, these algorithms are only guaranteed to produce a coloring matching Banaszczyk's bound up to a polylogarithmic (as opposed to constant) factor. Moreover, in the setting of Spencer's theorem, Banaszczyk's bound itself only corresponds to a discrepancy of $O(\sqrt{n\log{n}})$, which is already attained with high probability by a uniformly random coloring.

In this work, we initiate the study of optimal discrepancy minimization algorithms running in near input-sparsity time, by providing such an algorithm for Spencer's theorem. 
\begin{theorem}\label{thm:main}
There exists an absolute constant $C_{\ref{thm:main}}>0$ and a randomized algorithm $\on{Coloring}$ such that the following holds. On input a matrix $A\in \mb{R}^{m\times n}$ such that $\snorm{A}_{1\to \infty}\le 1$, $\on{Coloring}(A)$ runs in time $\tilde{O}(\on{nnz}(A) + n)$ and with probability at least $1/2$, outputs a vector $v\in \{\pm 1\}^n$ such that \[\snorm{Av}_{\infty}\le C_{\ref{thm:main}}\sqrt{n\log(m/n + 2)}.\]
\end{theorem}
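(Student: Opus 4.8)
The plan is to follow the convex‑geometric, LP‑based framework of Eldan--Singh \cite{ES18} --- which refines the partial‑coloring method of Beck and Spencer \cite{Bec81, Spe85} and its algorithmic incarnations \cite{LM15, Rot17} --- but to replace the linear‑algebraic LP solver, the source of the $n^\omega$ bottleneck, by a multiplicative‑weights solver tailored to this particular LP and running in near input‑sparsity time. Concretely, I would reduce \cref{thm:main} to $O(\log n)$ rounds of \emph{partial coloring}: given $x_0 \in [-1,1]^n$ with $k$ strictly fractional coordinates (the \emph{free} coordinates), produce $x_1 \in [-1,1]^n$ agreeing with $x_0$ outside the free set, with at least $k/2$ fewer free coordinates, and with $\snorm{A(x_1 - x_0)}_\infty = O(\sqrt{k\log(m/k+2)})$. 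Since the $\ell_2$‑norm of a row restricted to the free set shrinks with $k$ (and $\snorm{a_i}_1 \le k$, $\snorm{a_i}_2 \le \sqrt k$ there, using $\snorm{A}_{1\to\infty}\le 1$), summing the per‑round bounds over $k = n, n/2, n/4, \dots$ gives total discrepancy $O(\sqrt{n\log(m/n+2)})$, matching \cref{eq:spencer}; and as each round runs on a submatrix with at most $\on{nnz}(A)$ nonzeros, a per‑round running time of $\tilde O(\on{nnz}(A)+n)$ yields the overall bound. Following \cite{ES18}, a single round reduces to: choose slab widths $\delta_i \in (0,+\infty]$ (from an $O(\log n)$‑size geometric grid) so that the partial‑coloring lemma's sub‑Gaussian entropy condition $\sum_i g(\delta_i/\snorm{a_i}_2) \le k/C$ holds (with $g$ the usual step‑/Gaussian‑type function); then an approximately optimal solution of $\max\{\sang{c,x} : x \in [-1,1]^k,\ |\sang{a_i, x-x_0}| \le \delta_i\ \forall i\}$, for an appropriate random linear objective $c$, can be rounded to a valid partial coloring --- this is where $P := \{x \in [-1,1]^k : |\sang{a_i, x - x_0}| \le \delta_i\ \forall i\}$ having large Gaussian measure, i.e.\ the partial‑coloring lemma itself, is used.

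The heart of the matter is to solve this linear program to accuracy $\epsilon$ (in the objective value and every constraint slack) in time $\tilde O(\on{nnz} + k)$. I would run the multiplicative‑weights update method, keeping weights on the two‑sided constraints $|\sang{a_i, x - x_0}| \le \delta_i$ and folding the linear objective into the oracle (at each step, among box points obeying the single weighted constraint, pick one maximizing $\sang{c,x}$, a one‑dimensional computation) so that the objective --- whose width would otherwise be $\Theta(k)$ --- does not affect the iteration count. The remaining obstruction is the width of the constraints themselves: a $\{\pm1\}$‑valued oracle response can violate constraint $i$ by $\approx \snorm{a_i}_1$, which can be $\Theta(k)$, whereas its tolerance $\delta_i \approx \snorm{a_i}_2\sqrt{\log(m/k+2)}$ is only of order $\snorm{a_i}_2 \le \sqrt k$; the textbook bound is then $\tilde O(k/\epsilon^2)$ iterations, far too slow, and since the constraints are two‑sided we are outside the reach of known width‑independent packing/covering solvers. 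The new ingredient is a width‑reduction technique suited to this LP: \emph{randomize} the oracle (perturb its response by an independent random sign pattern) so that a martingale concentration argument shows the \emph{cumulative} violation of constraint $i$ over the run behaves as though its width were only $\snorm{a_i}_2$ rather than $\snorm{a_i}_1$; then rescale constraint $i$ by a factor $\Theta(\snorm{a_i}_2)$ --- bucketing the constraints dyadically by $\snorm{a_i}_2$ and using the entropy condition to bound each bucket --- so that all residual widths become $O(1)$ while the rescaled tolerances $\delta_i/\snorm{a_i}_2 = \Omega(\sqrt{\log(m/k+2)}) = \Omega(1)$ stay large enough for a constant choice of $\epsilon$, driving the iteration count down to $\tilde O(\log m)$. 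Making this consistent --- controlling the error from the oracle randomization, and checking that the accuracy actually delivered, $|\sang{a_i, \bar x - x_0}| \le O(\delta_i)$, is still good enough downstream (which amounts to starting from slightly smaller $\delta_i$) --- is the main obstacle, and I expect it to be the principal new ingredient of the paper.

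It then remains to assemble the pieces. The MWU output $\bar x$ is a convex combination of box vertices, hence lies in $[-1,1]^k$, and near‑optimality of $\sang{c,\bar x}$ together with the structure of $P$ forces all but at most $k/4$ of the free coordinates to lie within $O(\sqrt{\log(m/k+2)/k})$ of $\pm 1$; rounding those to $\pm 1$ adds only $O(\sqrt{k\log(m/k+2)})$ to the discrepancy, which is absorbed into the per‑round budget, and produces the partial coloring. The per‑round failure probability is a constant (from the random objective and the oracle randomization); amplifying each round to failure probability $1/\on{poly}(n)$ by $O(\log n)$ independent repetitions and union‑bounding over the $O(\log n)$ rounds gives the overall success probability $1/2$ demanded by \cref{thm:main}, with the running time unchanged up to logarithmic factors. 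The remaining verifications --- that the entropy condition can be met with grid‑valued $\delta_i$ (so the dyadic bucketing is finite), that the round‑by‑round discrepancies sum as claimed, and that every primitive (computing row norms, the weight updates, the oracle, the randomization) runs in input‑sparsity time --- are routine.
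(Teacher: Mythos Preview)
Your high-level plan matches the paper---partial coloring via the Eldan--Singh LP, solved by MWU with a width-reduction step as the crux---but the mechanism you propose for width reduction is not what the paper does and, as written, has a gap. Randomizing the oracle response only controls the \emph{variance} of the per-round loss $\langle a_i, y_t\rangle$; the standard MWU regret bound, however, depends on the worst-case magnitude of these losses (through the step-size constraint $\eta\le 1/\rho$), and if $y_t\in\{\pm1\}^n$ that magnitude is still $\snorm{a_i}_1$, not $\snorm{a_i}_2$. A martingale on top of standard MWU therefore does not bring the iteration count down from $\tilde O(n)$ to $\tilde O(1)$; one would need a variance-dependent (second-moment) regret bound, which is a genuinely different ingredient you do not invoke. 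The paper's width reduction is of another kind: in each outer MWU round it replaces the single linear inequality by the penalized problem $\min_{x\in[-1,1]^n}\big(-\rho_0 v^Tx/\sqrt n+\rho_+^TAx-\rho_-^TAx+\delta\snorm{Ax}_\infty\big)$ with $\delta=(\log n)^{-4}$, rewrites this as an $\ell_\infty$--$\ell_1$ minimax game, and solves it to additive error $\sqrt n/(\log n)^4$ via the Clarkson--Hazan--Woodruff sublinear primal--dual scheme (using $\snorm{a_i}_2\le\sqrt n$ to view $[-1,1]^n\subset\sqrt n\cdot B_2^n$). The returned $x'$ then either already has $\snorm{Ax'}_\infty=\tilde O(\sqrt n)$, or has $\snorm{Ax'}_\infty$ so large that the linear part must be very negative, whence the \emph{rescaled} point $x'\cdot\tilde O(\sqrt n)/\snorm{Ax'}_\infty$ satisfies the oracle inequality with width $\tilde O(\sqrt n)$.

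Two further pieces you are missing. First, the inner minimax solver costs $\tilde O(n^2+nk)$ per call regardless of sparsity (the $n^2$ comes from $\tilde O(n)$ iterations of an $O(n)$-time mirror-descent step), so the paper does \emph{not} get input-sparsity time directly from the LP solver. It first colors all columns with $\snorm{Ae_j}_2\le\sqrt n/\log n$ using the online Koml\'os algorithm of \cite{ALS21} and randomly colors the small set of columns with support exceeding $(\log n)\on{nnz}(A)/n$, so that on the remainder one has $n^2=\tilde O(\on{nnz}(A))$ and bounded column support $k$. Second, the claim that an approximate optimizer has linearly many coordinates within $O(\sqrt{\log(m/k)/k})$ of $\pm1$ is not a consequence of \cite{ES18} or \cite{Rot17}; those analyze the exact optimum. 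The paper proves separately (via Spencer's supersaturation count and Gaussian Lipschitz concentration, together with an averaging argument over the slab parameter $C$) that a $(1-c/\log n)$-approximate optimizer has $\Omega(n)$ coordinates within $1/\log n$ of $\pm1$, and this logarithmic stability is exactly what is needed to match the $1/\operatorname{polylog}(n)$ accuracy that the first-order solver delivers.
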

\begin{remark}
To see that this recovers \cref{eq:spencer}, we let the $i^{th}$ row of $A$ be the indicator vector of $S_{i}$ and note that $\snorm{A}_{1\to \infty} = \max_{i,j}|a_{i,j}| \leq 1$. 
\end{remark}

A pleasant feature of our algorithm (see \cref{sec:outline} for an overview) is that it relies not on a new approach to proving Spencer's theorem, but rather on solving the linear-program of Eldan--Singh in near-input sparsity time by leveraging certain key structural aspects, thereby raising the possibility that the linear algebraic primitives involved in other algorithms for discrepancy minimization may also be implemented much more efficiently. Moreover, during the course of our algorithm, we develop a novel method for solving a certain class of linear-programs (to polylogarithmic relative accuracy) in near-input sparsity time, which may be of independent interest as a rare instance of ``width reduction'' for linear-programs not of covering/packing type.  

\subsection{Proof outline}
\label{sec:outline}
We use the notation in \cref{eq:spencer}. For simplicity, we consider the case $m=n$ and $\sum_{i=1}^{n}|S_i| = \Theta(n^2)$ which already contains most of the ideas; later, we will sketch the modifications needed for the general case. For $C > 0$, we let
\[\Gamma_{C} := \left\{x \in [-1,1]^{n}: \max_{i \in [m]}\left|\sum_{j\in S_i}x_j\right| \leq C\sqrt{n}\right\}.\]
By a standard reduction (see \cref{thm:partial-coloring}), it suffices to devise an algorithm which runs in time $\tilde{O}(n^2)$ and finds $x \in \Gamma_{C}$ such that $x$ has at least $n/C$ coordinates which are $\pm 1$.

Our starting point for doing this is the aforementioned theorem of Eldan--Singh \cite{ES18} which shows that there is an absolute constant $C > 0$ such that with high probability over the choice of a random Gaussian vector $g \sim \mc{N}(0,1)^{\otimes n}$, 
\begin{align*}
    x^* := \arg\max_{x \in \Gamma_{C}}\langle g, x \rangle 
\end{align*}
has $n/C$ coordinates which are $\pm 1$. In particular, we can a find a point with the properties we want by solving the above linear program. In fact, a slight modification of the argument in \cite[Section~3]{Rot17} shows that it suffices to solve the linear program to within relative accuracy $1/\on{poly}(n)$ and then randomly round the approximate maximizer. However, solving the linear program to such small relative error forces us to use high-precision methods for solving linear programs, such as cutting-plane or interior-point methods, for which the matrix-multiplication bottleneck seems rather hard to circumvent. Instead, we rely on first-order optimization methods. 
\\  

\paragraph{\textbf{Stability of the linear program}:} In order to use first-order methods in time $\tilde{O}(n^2)$, we need to show that solving the linear program to within $1/\on{poly}(\log{n})$ relative accuracy (and then randomly rounding the approximate solution) suffices. More concretely, we show (\cref{prop:gauss-program}) that all points achieving at least a $(1-c_1/\log{n})$-factor of the optimum have $\Omega(n)$ coordinates with absolute value at least $1-c_2/\log{n}$; these coordinates can then be randomly rounded to have absolute value $1$, while only incurring additional discrepancy $O(\sqrt{n})$. Our proof of this shares some similarities to \cite{Rot17, ES18}, but ultimately relies on a different phenomenon: we use a supersaturation version of Spencer's theorem due to Spencer \cite{Spe85} as well as Gaussian concentration for Lipschitz functions to show (\cref{lem:large-C}) that the expected value of the linear program is $(\sqrt{2/\pi} - \delta_{C})n$, where $\delta_{C} \to 0$ as $n \to \infty$, which further enables us to show that the ``derivative'' of the map $C \mapsto \mb{E}[\max_{x \in \Gamma_{C}}\langle g, x\rangle]$ is sufficiently small, in a suitably averaged sense, for sufficiently large $C$ (\cref{lem:submultiplicative}). This gives us the desired conclusion since, for sufficiently large $C$, if there were a point within at least a $(1-c_1/\log{n})$-factor of $\mb{E}[\max_{x \in \Gamma_{C}}\langle g, x\rangle]$ with only sublinear coordinates with absolute value at least $1-c_2/\log{n}$, then it turns out we could find a point $x' \in \Gamma_{C(1+c_2/\log{n})}$ with $\langle g, x' \rangle - \mb{E}[\max_{x \in \Gamma_{C(1+c_2/\log{n})}}\langle g, x\rangle] = \tilde{\Omega}(n)$. However, by Gaussian concentration, this can only happen with probability at most $\exp(-\tilde\Omega(n))$ in the choice of $g$.\\

\paragraph{\textbf{The Multiplicative Weights Update (MWU) framework: }} Our task is now reduced to solving the linear program, which we rewrite in a more convenient form as a feasibility/search problem: for given $C > 0$, find $x$ such that  
\begin{align}
\label{eq:LP-original}
    x &\in [-1,1]^{n} \nonumber \\
    -\langle g/\sqrt{n}, x\rangle &\leq - \on{OPT}_{C} \nonumber \\
    Ax &\leq C \sqrt{n} \cdot \mbm{1},
\end{align}
where $\on{OPT}_{C} := \max_{x \in \Gamma_C}\langle g, x \rangle/\sqrt{n}$ \footnote{We do not know $\on{OPT}_C$ \emph{a priori}; however, by using a standard binary search procedure, we may assume access to a sufficiently good approximation, and we will ignore the distinction in this sketch.} and $A \in \mb{R}^{2n \times n}$ with rows $A_{2i} := \mbm{1}_{S_i}$, $A_{2i+1} := -\mbm{1}_{S_i}$. 
In fact, by the stability of the linear program, it suffices to find $x \in [-1,1]^{n}$ which satisfies the remaining inequalities up to an additive term of order $\sqrt{n}/\log{n}$ on the right hand side. By the usual multiplicative weights update method for solving linear programs (see, e.g., \cite{AHK12}), this can be done using iterations consisting of solving a \emph{single} linear inequality over the unit cube: find $x' \in [-1,1]^{n}$ s.t.
\begin{align}
\label{eq:iteration}
    -\rho_0 \langle g/\sqrt{n}, x' \rangle + \sum_{i=1}^{2n}\rho_i \langle A_i, x' \rangle \leq - \rho_0 \on{OPT}_{C} + (1-\rho_0)C\sqrt{n} + \sqrt{n}/\log{n},
\end{align}
where $\rho_0 , \rho_1,\dots, \rho_{2n}$ is a given probability distribution on $[2n+1]$ (corresponding to the ``weights of the experts'' in the MWU framework). Note that such an $x$, if it exists, can be found in time $O(n^2)$ using a greedy coordinate-by-coordinate strategy. 
Therefore, if $\tilde{O}(1)$ iterations of MWU were sufficient, then we would be done

The number of iterations required by MWU to give a solution with the required accuracy is
$\tilde{O}(w^{2}/n)$, where $w$ is the ``width'' of the procedure measuring the maximum violation of any constraint by the intermediate solutions $x'$. In our case, it is easy to see that
\begin{align*}
    w = O(\sqrt{n}) + \max_{x'} \langle g/\sqrt{n}, x' \rangle + \max_{x'} \max_{i \in [2n]}|\langle A_i, x'\rangle| = O(\sqrt{n}) + \max_{x'} \max_{i \in [2n]}|\langle A_i, x'\rangle|, 
\end{align*}
where $x'$ ranges over the solutions to the single linear inequality output by different iterations of MWU and where we have used that $\max_{x' \in [-1,1]^{n}}|\langle g/\sqrt{n}, x'\rangle| = O(\sqrt{n})$ as long as $\|g\|_{2} = O(\sqrt{n})$, which holds except with exponentially small probability over the choice of $g$. Therefore, if it were the case that
\[\max_{x'}\max_{i \in [2n]}|\langle A_i, x'\rangle| = \tilde{O}(\sqrt{n}),\]
then $\tilde{O}(w^{2}/n) = \tilde{O}(1)$, as required. Unfortunately, with the greedy coordinate-by-coordinate strategy for solving the single linear inequality on the cube, the width could potentially be of order $n$, so that $\tilde{O}(w^2/n) = \tilde{O}(n)$ and we get a total running time of $\tilde{O}(n^3)$, as for previously known algorithms.\\

\paragraph{\textbf{Width reduction}: }

To overcome this obstacle, we develop a novel ``width reduction'' technique (see the proof that \cref{prop:optimization} implies \cref{prop:oracle}), which guarantees that the width is $\tilde{O}(1)$ while blowing up the cost of each iteration by a factor of at most $\tilde{O}(1)$. In the optimization literature, width reduction techniques have been famously used to solve non-negative linear programs in near input-sparsity time (see, e.g., \cite[Table~1]{AO19}). However, our linear program is not of this form and our technique is completely different. In each iteration, instead of considering the simple linear inequality mentioned earlier, consider the following convex program: find $x' \in [-1,1]^{n}$ minimizing
\begin{align*}
    -\rho_0 \langle g/\sqrt{n}, x' \rangle + \sum_{i=1}^{2n}\rho_i \langle A_i, x' \rangle + \frac{1}{\on{poly}(\log   n)}\|Ax'\|_{\infty},
\end{align*}
possibly up to an additive error of $\sqrt{n}/(2\log n)$. This program has a few crucial properties:
\begin{itemize}
    \item By the feasibility of \cref{eq:LP-original}, it follows that the optimum value is at most the right hand side of \cref{eq:iteration}. Let $x'$ denote the (approximate) solution returned by the convex program. If $\|Ax'\|_{\infty} = \tilde{O}(\sqrt{n})$, then the point $x'$ is a solution to \cref{eq:iteration} with width $\tilde{O}(\sqrt{n})$, which suffices for our purpose. 
    
   \item By testing at $x' = 0$, we see that the optimum value is always non-positive. In particular, any minimizer $x' \in [-1,1]^{n}$ satisfies
    \[\sum_{i=1}^{2n}\rho_i \langle A_i, x' \rangle + \frac{1}{\on{poly}(\log n)}\|Ax'\|_{\infty} \le  \|g\|_{2}.\]
    Therefore, assuming $\|g\|_2 = O(\sqrt{n})$, if the minimizer $x'$ satisfies $\|Ax'\|_{\infty}\gg \sqrt{n}\on{poly}(\log n)$ then it must be that
    \[\sum_{i=1}^{2n}\rho_i \langle A_i, x' \rangle \ll -\sqrt{n}\on{poly}(\log  n),\]
    in which case $x'' = \tilde{O}(x'/\|Ax'\|_{\infty})$ is readily seen to satisfy \cref{eq:iteration} with width $\tilde{O}(1)$.
    
    \item The program can be written as a linear min-max program (or $\ell_{\infty}$--$\ell_{1}$ matrix game): 
    \[\min_{x' \in [-1,1]^{n}}\max_{y \in \Delta_{2n+1}} y_{0}\rho_0 \left(\langle g/\sqrt{n}, x' \rangle + \sum_{i=1}^{2n}\rho_i \langle A_i, x' \rangle\right) + \sum_{i=1}^{2n}y_{i}\langle A_i, x'\rangle/\on{poly}(\log n),\]
    where $\Delta_{2n+1} \subset \mb{R}^{2n+1}$ denotes the unit simplex consisting of probability distributions on $[2n+1]$, which one can try to solve using a saddle-point mirror descent scheme. At this point, it is natural to think that the correct geometry on $[-1,1]^{n}$ is given by the $\infty$-norm, so that one runs into the usual problem of the lack of an $\tilde{\Omega}(1)$-strongly convex mirror map on $[-1,1]^{n}$. However, we can take advantage of the fact that $\|A_i\|_{2} = O(\sqrt{n})$ in Spencer's problem (and that projecting onto $[-1,1]^{n}$ with respect to the Euclidean norm is easy) by viewing $[-1,1]^{n}$ as a subset of the $\ell_{2}$ ball of radius $\sqrt{n}$ and using the sublinear primal-dual framework of Clarkson, Hazan, and Woodruff \cite{CHW12} to show that the linear min-max program can indeed be solved to desired accuracy in time $\tilde{O}(n^2)$ (\cref{prop:opt-actual}). \\
\end{itemize}

\paragraph{\textbf{Running in near input-sparsity time}: }So far, we have assumed that $m = n$ and $\sum_{i=1}^{n}|S_i| = \tilde{O}(n^2)$. The same discussion extends to handle the case of general $m \leq n^{2}$ \footnote{If $m \geq n^2$, then a uniformly random coloring succeeds with high probability.} under the additional assumption that the set system is ``dense'' i.e. $\sum_{i=1}^{m}|S_i| = \tilde{O}(mn)$. To remove the density assumption, we require a few additional ingredients.  
\begin{itemize}
    \item First, we isolate variables which appear in at most $n/\on{poly}(\log n)$ sets and color them using the input-sparsity time online algorithm for Banaszcyzk's theorem due to Alweiss, Liu, and Sawhney \cite{ALS21} (\cref{app:deferred}). As mentioned earlier, this algorithm loses a polylogarithmic factor in the discrepancy guarantee, compared to Banaszczyk's theorem. However, since we are coloring only those variables which are in at most $n/\on{poly}(\log n)$ sets, such a loss is acceptable. 
    
    \item At this point, each of the $n_1$ remaining variables appears in at least $n/\on{poly}(\log n)$ sets and at most $m$ sets. Once again, we may assume that $m \leq n_1^{2}$, else a uniformly random coloring suffices. Let the total number of non-zero entries in this restricted incidence matrix be $N$, so that $n_1n/\on{poly}(\log n)\leq N \leq n_1m$. We isolate the at most $n_1/\log{n}$ variables which are present in at least $(\log n)\cdot N/n_1$ sets and color them uniformly at random. Note that a uniformly random coloring loses a factor of $\sqrt{\log{n_1}}$ compared to Spencer's bound, but this again acceptable, since we are only coloring at most $n_1/\log{n}$ variables.  
    
    \item Finally, all the remaining variables have the property of being present in at most $k = (\log n)\cdot N/n_1$ sets each and a careful implementation of \cite{CHW12} using a suitable (but fairly simple) data structure shows that the linear min-max program, restricted to such variables, can be solved in time $\tilde{O}(kn_1 + n_1^2) = \tilde{O}(N) = \tilde{O}(\sum_{i=1}^{m}|S_i|)$, as desired.     
\end{itemize}

\section{Preliminary reductions}
\label{sec:prelims}

In this section, we formally record a couple of preliminary reductions, which allow us to deduce \cref{thm:main} from a similar statement about appropriate `partial colorings' for input matrices $A$ with $\tilde{\Omega}(n^2)$ non-zero entries. 

First, by using an input-sparsity time online algorithm for the Koml\'os conjecture (with polylogarithmic losses) due to Alweiss, Liu, and the third author \cite{ALS21} (with an alternate proof by Liu and the last two authors \cite{LSS22}) on those variables which appear in $O(n/(\log n)^2)$ sets, it suffices to prove \cref{thm:main} with an extra additive $n^{2}$ in the running time.  

\begin{theorem}\label{thm:full-coloring}
There exists an absolute constant $C_{\ref{thm:full-coloring}}>0$ and a randomized algorithm $\on{Dense-Coloring}$ such that the following holds. On input a matrix $A\in \mb{R}^{m\times n}$ such that $\snorm{A}_{1\to \infty}\le 1$, $\on{Dense-Coloring}(A)$ runs in time $\tilde{O}(\on{nnz}(A) + n^2)$ and with probability at least $1/2$, outputs a vector $v\in \{\pm 1\}^n$ such that \[\snorm{Av}_{\infty}\le C_{\ref{thm:full-coloring}}\sqrt{n\log(m/n + 2)}.\]
\end{theorem}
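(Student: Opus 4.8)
The plan is to reduce \cref{thm:full-coloring} to a partial-coloring subroutine and to implement that subroutine by approximately solving the Eldan--Singh linear program in time $\tilde{O}(\on{nnz}(A)+n^2)$. By the standard iterated partial-coloring argument (\cref{thm:partial-coloring}), it suffices to construct, for a suitable absolute constant $C$, an algorithm running in time $\tilde{O}(\on{nnz}(A)+n^2)$ that with probability bounded away from $0$ returns $x\in[-1,1]^n$ with $\snorm{Ax}_\infty = O(\sqrt{n\log(m/n+2)})$ at least $n/C$ of whose coordinates lie in $\{\pm1\}$. Freezing those coordinates and recursing on the induced submatrix (whose column count shrinks by a factor $1-1/C$ each round) terminates in $O(\log n)$ rounds; the per-round discrepancies $O(\sqrt{n_j\log(m/n_j+2)})$ with $n_j\le(1-1/C)^jn$ sum to $O(\sqrt{n\log(m/n+2)})$ by geometric decay and concavity of $t\mapsto\sqrt{t\log(m/t+2)}$, while the running times form a geometric series dominated by the first term. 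We may assume $m\le n^2$ throughout, since otherwise a uniformly random coloring already succeeds with high probability.

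For the partial-coloring step, fix a Gaussian $g\sim\mc{N}(0,1)^{\otimes n}$. By \cite{ES18}, $\arg\max_{x\in\Gamma_C}\langle g,x\rangle$ has $\Omega(n)$ coordinates in $\{\pm1\}$, and by the stability of the program (\cref{prop:gauss-program}, which rests on the supersaturation form of Spencer's theorem, the asymptotics $\mb{E}\max_{x\in\Gamma_C}\langle g,x\rangle=(\sqrt{2/\pi}-\delta_C)n$ of \cref{lem:large-C}, the averaged-derivative bound \cref{lem:submultiplicative}, and Gaussian concentration for Lipschitz functions) it suffices to produce $x\in[-1,1]^n$ satisfying the constraints of \cref{eq:LP-original} up to additive slack $O(\sqrt{n}/\log n)$ and then randomly round its $\Omega(n)$ coordinates of absolute value $\ge 1-O(1/\log n)$ to $\pm1$, at a cost of only $O(\sqrt n)$ extra discrepancy. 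We locate $\on{OPT}_C$ to sufficient accuracy by binary search (a $\tilde{O}(1)$ overhead) and run the multiplicative weights update method on the feasibility problem \cref{eq:LP-original}: each iteration reduces to solving a single inequality \cref{eq:iteration} over $[-1,1]^n$, and the number of iterations is $\tilde{O}(w^2/n)$, where $w$ bounds the width of the oracle's outputs.

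The naive coordinate-greedy solution of \cref{eq:iteration} can have width $\Theta(n)$; instead, each iteration (approximately) solves the convex program obtained by adding the penalty $\tfrac{1}{\on{poly}(\log n)}\snorm{Ax'}_\infty$ (the proof that \cref{prop:optimization} implies \cref{prop:oracle}). Feasibility of \cref{eq:LP-original} bounds its optimum by the right-hand side of \cref{eq:iteration}, while testing $x'=0$ shows the optimum is nonpositive; hence a returned approximate minimizer either satisfies \cref{eq:iteration} with width $\tilde{O}(\sqrt n)$ (when $\snorm{Ax'}_\infty=\tilde{O}(\sqrt n)$) or, after rescaling $x''=\tilde{O}(x'/\snorm{Ax'}_\infty)$, with width $\tilde{O}(1)$ --- either way $\tilde{O}(w^2/n)=\tilde{O}(1)$. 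To solve the penalized program (equivalently the $\ell_\infty$--$\ell_1$ matrix game) to additive accuracy $\tilde{\Omega}(\sqrt n)$ in time $\tilde{O}(n^2)$, we exploit $\snorm{A_i}_2=O(\sqrt n)$ and the ease of Euclidean projection onto $[-1,1]^n$: viewing $[-1,1]^n\subset\{z:\snorm{z}_2\le\sqrt n\}$, the sublinear primal-dual scheme of Clarkson--Hazan--Woodruff \cite{CHW12} --- sampling one primal coordinate and one expert index per step and maintaining running estimates --- yields \cref{prop:opt-actual}. This settles the dense regime $\on{nnz}(A)=\tilde{O}(mn)$ within $\tilde{O}(n^2+\on{nnz}(A))$.

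For general $A$ one must remove the density assumption: first isolate the columns of degree at most $n/\on{poly}(\log n)$ and color them with the online input-sparsity algorithm of Alweiss--Liu--Sawhney \cite{ALS21} (a polylogarithmic loss, tolerable on such low-degree columns; see \cref{app:deferred}); then, on the $n_1$ survivors (each of degree between $n/\on{poly}(\log n)$ and $m\le n_1^2$, with $N$ nonzeros), isolate the $\le n_1/\log n$ columns of degree $\ge(\log n)N/n_1$ and color them uniformly at random (a $\sqrt{\log n_1}$ loss on so few columns); the remaining columns have degree $\le k:=(\log n)N/n_1$, and a careful, data-structured implementation of \cite{CHW12} runs the matrix game in $\tilde{O}(kn_1+n_1^2)=\tilde{O}(N)$, keeping the total within $\tilde{O}(\on{nnz}(A)+n^2)$. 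I expect the main obstacle to be the width-reduction step: arranging simultaneously that the penalized program's optimum is controlled by feasibility of the original LP, that \emph{any} returned approximate minimizer can be massaged into a width-$\tilde{O}(1)$ solution of \cref{eq:iteration}, and that this penalized program remains solvable to accuracy $\tilde{\Omega}(\sqrt n)$ within the sublinear \cite{CHW12} budget under the $\ell_2$-ball geometry; a secondary difficulty is the quantitatively sharp stability estimate \cref{prop:gauss-program}, which needs both the precise asymptotics of $\mb{E}\max_{x\in\Gamma_C}\langle g,x\rangle$ and a careful use of Gaussian concentration.
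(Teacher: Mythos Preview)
Your proposal is correct and follows the paper's approach: the first paragraph is exactly the paper's proof of \cref{thm:full-coloring} given \cref{thm:partial-coloring} (iterated partial coloring with geometric shrinkage of the uncolored set and telescoping discrepancy), and the remaining paragraphs faithfully sketch the paper's proof of \cref{thm:partial-coloring} via \cref{prop:gauss-program}, \cref{prop:solve-program}, the width-reduction of \cref{prop:oracle}/\cref{prop:optimization}, and the sublinear primal--dual solver \cref{prop:opt-actual}. One organizational point: the Alweiss--Liu--Sawhney step in your final paragraph is not needed for \cref{thm:full-coloring} --- in the paper it is used only to reduce \cref{thm:main} (running time $\tilde O(\on{nnz}(A)+n)$) to \cref{thm:full-coloring} (running time $\tilde O(\on{nnz}(A)+n^2)$); once you have the $+n^2$ budget, only the heavy-column removal (random coloring of the $\le n/\log n$ columns of degree $\ge (\log n)\on{nnz}(A)/n$) is required to make the CHW-based solver run in $\tilde O(kn+n^2)=\tilde O(\on{nnz}(A)+n^2)$.
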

The proof that \cref{thm:full-coloring} implies \cref{thm:main} is presented in \cref{app:deferred}.

Next, in order to prove \cref{thm:full-coloring}, it suffices to prove the following statement about colorings valued in $[-1,1]^{n}$ with linearly many coordinates colored by $\pm 1$. 

\begin{theorem}\label{thm:partial-coloring}
There exists an absolute constant $C_{\ref{thm:partial-coloring}} > 0$ and a randomized algorithm $\on{Partial-Coloring}$ such that the following holds. On input
\begin{itemize}
    \item a diagonal matrix $\Lambda\in \mb{R}^{n\times n}$ with $\snorm{\Lambda}_{1\to \infty}\le 1$, and
    \item a matrix $A \in \mb{R}^{m\times n}$ with $\snorm{A}_{1\to \infty}\le 1$,
\end{itemize}
$\on{Partial-Coloring}(A, \Lambda)$ runs in time $\tilde{O}(\on{nnz}(A) + n^2)$ and with probability at least $1/2$ outputs a vector ${v}\in [-1,1]^{n}$ such that 
\begin{itemize}
    \item $\sum_{i\in [n]}\mbm{1}_{|v_i| = 1}\ge C_{\ref{thm:partial-coloring}}^{-1} \cdot n$, and
    \item $\snorm{A\Lambda v}_{\infty}\le C_{\ref{thm:partial-coloring}} \sqrt{n\log(m/n + 2)}$.
\end{itemize}
\end{theorem}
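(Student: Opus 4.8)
# Proof Plan for Theorem~\ref{thm:partial-coloring}

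\textbf{Overall strategy.} The plan is to implement the Eldan--Singh linear program stably, so that a $(1-\Theta(1/\log n))$-approximate maximizer of a random linear functional over the polytope already has linearly many nearly-tight coordinates, and then to randomly round those coordinates. Concretely, set $\Gamma_C := \{x \in [-1,1]^n : \snorm{A\Lambda x}_\infty \le C\sqrt{n\log(m/n+2)}\}$ and sample $g \sim \mc{N}(0,1)^{\otimes n}$. The argument has three pieces: (i) a \emph{structural} statement (the content sketched in \cref{sec:outline} around \cref{prop:gauss-program}), that with high probability in $g$, every $x \in \Gamma_C$ with $\langle g, x\rangle \ge (1-c_1/\log n)\,\mb{E}[\max_{\Gamma_C}\langle g,\cdot\rangle]$ has $\Omega(n)$ coordinates of absolute value $\ge 1 - c_2/\log n$; (ii) an \emph{algorithmic} statement, that such an approximate maximizer can be found in time $\tilde O(\on{nnz}(A)+n^2)$ via the width-reduced MWU scheme described in the outline (this is where \cref{prop:optimization}/\cref{prop:oracle}/\cref{prop:opt-actual} and the CHW sublinear primal-dual solver enter); and (iii) a \emph{rounding} step, where each coordinate with $|v_i| \ge 1 - c_2/\log n$ is independently rounded to $\pm 1$ with probability matching its fractional part. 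After rounding, those coordinates are exactly $\pm 1$, giving the first bullet, and the additional discrepancy incurred is, row by row, a sum of independent mean-zero variables of total variance $O(n/(\log n)^2)$ against $\snorm{A}_{1\to\infty}\le 1$; a union bound over $m$ rows (using $m \le n^2$, else a uniform random coloring already works and no LP is needed) via Hoeffding/Bernstein keeps the extra discrepancy $O(\sqrt{n})$, absorbed into $C_{\ref{thm:partial-coloring}}\sqrt{n\log(m/n+2)}$.

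\textbf{Order of steps.} First I would handle the trivial regimes: if $m \ge n^2$ a uniformly random $v \in \{\pm 1\}^n$ satisfies both bullets with probability $\ge 1/2$ by Hoeffding and a union bound, so assume $m < n^2$. Second, absorb $\Lambda$: since $\Lambda$ is diagonal with $\snorm{\Lambda}_{1\to\infty}\le 1$, replacing $A$ by $A\Lambda$ preserves $\snorm{A\Lambda}_{1\to\infty}\le 1$ and does not change $\on{nnz}$ up to a constant, so it suffices to run the scheme on $A\Lambda$ and we may henceforth drop $\Lambda$. Third, invoke the structural result (assumed from the outline, \cref{prop:gauss-program} and the supersaturation/Gaussian-concentration lemmas \cref{lem:large-C}, \cref{lem:submultiplicative}) to fix constants $c_1, c_2, C$ so that the ``many nearly-tight coordinates'' property holds with probability $\ge 3/4$ in $g$. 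Fourth, run the width-reduced MWU/CHW algorithm to produce, with probability $\ge 3/4$, a point $\wh x \in [-1,1]^n$ with $\snorm{A\wh x}_\infty \le (C + O(1/\log n))\sqrt{n\log(m/n+2)}$ and $\langle g, \wh x\rangle \ge (1 - c_1/\log n)\on{OPT}_C$; combined with the structural step this gives $\Omega(n)$ coordinates of $\wh x$ with $|\wh x_i| \ge 1 - c_2/\log n$. Fifth, perform the randomized rounding above on exactly those coordinates and verify via Bernstein plus a union bound over $m < n^2$ rows that the total discrepancy stays $O(\sqrt{n\log(m/n+2)})$. A final union bound over the $O(1)$ bad events gives success probability $\ge 1/2$, and the running time is $\tilde O(\on{nnz}(A) + n^2)$ as each stage meets that budget.

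\textbf{Main obstacle.} The hard part is not the rounding or the regime bookkeeping but the algorithmic step (iv): ensuring the MWU width is $\tilde O(1)$ rather than $\tilde O(n)$, which is precisely the novel width-reduction ingredient. The delicate point is that in each MWU iteration we replace the single linear inequality over the cube by the regularized convex program $\min_{x'\in[-1,1]^n}\,[-\rho_0\langle g/\sqrt n,x'\rangle + \sum_i \rho_i\langle A_i,x'\rangle + \on{poly}(\log n)^{-1}\snorm{Ax'}_\infty]$, and then argue dichotomously: either the returned $x'$ has $\snorm{Ax'}_\infty = \tilde O(\sqrt n)$ (so it is a bounded-width witness directly), or $\sum_i \rho_i\langle A_i,x'\rangle \ll -\sqrt n\,\on{poly}(\log n)$, in which case the rescaled point $\tilde O(x'/\snorm{Ax'}_\infty)$ witnesses \cref{eq:iteration} with width $\tilde O(1)$. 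Certifying that this regularized inner program can itself be solved to additive accuracy $O(\sqrt n/\log n)$ in time $\tilde O(\on{nnz}(A)+n^2)$ requires the $\ell_\infty$--$\ell_1$ matrix-game reformulation and the Clarkson--Hazan--Woodruff sublinear primal-dual framework, crucially exploiting $\snorm{A_i}_2 = O(\sqrt n)$ to embed $[-1,1]^n$ into the Euclidean ball of radius $\sqrt n$ (sidestepping the absence of a strongly convex mirror map in $\ell_\infty$); this is the technical heart and I would expect to spend most of the proof there, citing \cref{prop:optimization}, \cref{prop:oracle}, and \cref{prop:opt-actual} for the quantitative guarantees.
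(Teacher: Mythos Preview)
Your high-level architecture (absorb $\Lambda$, handle $m\ge n^2$ trivially, invoke \cref{prop:gauss-program} for the structural stability, solve the LP approximately via the width-reduced MWU/CHW machinery, then randomly round the near-tight coordinates) matches the paper's approach. However, there is one genuine gap in your plan that would break the running-time claim.

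\textbf{Missing step: the heavy-column decomposition.} You assert that step~(iv) ``run[s] the width-reduced MWU/CHW algorithm \dots\ in time $\tilde O(\on{nnz}(A)+n^2)$,'' citing \cref{prop:optimization}, \cref{prop:oracle}, \cref{prop:opt-actual}. But those propositions only guarantee time $\tilde O(kn+n^2)$, where $k=\max_{i\in[n]}|\on{supp}(Ae_i)|$ is the maximum column support. In general $k$ can be as large as $m\le n^2$, so invoking the solver on $A$ directly gives only $\tilde O(mn+n^2)$, not $\tilde O(\on{nnz}(A)+n^2)$. The paper's proof of \cref{thm:partial-coloring} inserts an extra step you omitted: it first separates off the set $\mc{C}_{\on{Heavy}}=\{i:|\on{supp}(Ae_i)|\ge (\log n)\,\on{nnz}(A)/n\}$, which has at most $n/\log n$ columns by Markov, colors those columns uniformly at random (Bernstein plus union bound over $m\le n^2$ rows gives $O(\sqrt n)$ discrepancy), and only then applies \cref{prop:gauss-program} and \cref{prop:solve-program} to the remaining matrix $A_2$, which now satisfies $k\le (\log n)\,\on{nnz}(A)/n$. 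This is exactly what makes $kn=\tilde O(\on{nnz}(A))$. Without this decomposition your time budget is not met.

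A minor point: in your rounding analysis you state the per-row variance is $O(n/(\log n)^2)$. Rounding a coordinate with $|x_i|\ge 1-c_2/\log n$ to $\pm1$ with mean $x_i$ has variance $1-x_i^2\le 2c_2/\log n$, so the total per-row variance is $O(n/\log n)$, not $O(n/(\log n)^2)$. This does not affect the conclusion (Bernstein with $m\le n^2$ still yields an $O(\sqrt n)$ increment), but the stated bound is off by one logarithm.
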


The deduction of \cref{thm:full-coloring} from \cref{thm:partial-coloring} is, by now, standard, and essentially identical to that in Rothvoss \cite{Rot17}; we include the details in \cref{app:deferred} for the sake of completeness. 

\cref{thm:partial-coloring} follows from the following pair of propositions.

\begin{proposition}\label{prop:gauss-program}
There exist absolute constants $C_{\ref{prop:gauss-program}}, \eta_{\ref{prop:gauss-program}}>0$ such that the following holds. Given $A\in \mb{R}^{m\times n}$ such that $\snorm{A}_{1\to\infty}\le 1$,  independently sample $g\sim\mc{N}(0,1)^{\otimes n}$ and $C_{\on{Alg}}\sim\on{Unif}([C_{\ref{prop:gauss-program}}, 2C_{\ref{prop:gauss-program}}])$, and  let \[\Gamma_{A, C_{\on{Alg}}} := \{x\in \mb{R}^n:\snorm{Ax}_{\infty}\le C_{\on{Alg}}\sqrt{n\log(m/n+2)} \wedge \snorm{x}_{\infty}\le 1\}.\]
Then the following hold:
\begin{itemize}
    \item with probability at least $1-\exp(-\Omega(n))$, we have that 
    \[\snorm{g}_{2}\le 2\sqrt{n},\]
    \item with probability at least $4/5$, for any ${x}\in \Gamma_{A, C_{\on{Alg}}}$ such that 
    \[\sang{g,x}/\sqrt{n} \geq \sup_{y\in \Gamma_{A, C_{\on{Alg}}}}\sang{g,y}/\sqrt{n}  - \epsilon\eta_{\ref{prop:gauss-program}} \sqrt{n},\]
    we have that
    \[\sum_{j\in [n]}\mbm{1}_{|x_j|\ge 1-\epsilon}\ge \eta_{\ref{prop:gauss-program}}n,\]
    where $\epsilon = 1/\log n$.
\end{itemize}
\end{proposition}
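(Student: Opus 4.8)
\emph{Proof plan.} I would first set up notation: write $T:=\sqrt{n\log(m/n+2)}$ and, for $C>0$, let $\Gamma_C:=\{x\in[-1,1]^n:\snorm{Ax}_\infty\le CT\}$ (so $\Gamma_{A,C_{\on{Alg}}}=\Gamma_{C_{\on{Alg}}}$), $\on{OPT}_C(g):=\max_{x\in\Gamma_C}\sang{g,x}$ and $F(C):=\mb E_g\on{OPT}_C(g)$. The first bullet is immediate since $\snorm g_2^2\sim\chi^2_n$. For the second bullet the substantive inputs are \cref{lem:large-C} --- which, applied to any $B\in\mb R^{m\times k}$ with $\snorm B_{1\to\infty}\le1$, gives $\mb E_h\max\{\sang{h,v}:v\in[-1,1]^k,\ \snorm{Bv}_\infty\le C\sqrt{k\log(m/k+2)}\}\ge(\sqrt{2/\pi}-\delta_C)k$ with $\delta_C$ below any prescribed constant once $C,k$ are large --- together with \cref{lem:submultiplicative} and the elementary fact that $g\mapsto\on{OPT}_C(g)$, being the support function of $\Gamma_C\subseteq[-1,1]^n$, is $\sqrt n$-Lipschitz, so that Gaussian concentration gives $\Pr_g[|\on{OPT}_C(g)-F(C)|\ge t]\le2e^{-t^2/2n}$ for each fixed $C$ (and the same for the analogous quantity on any column-submatrix of $A$). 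I also record that $(1-t)\Gamma_{C(1+t)}\subseteq\Gamma_C\subseteq\Gamma_{C(1+t)}$, so $F$ is nondecreasing with $(1-t)F(C(1+t))\le F(C)$; combined with $F\le\mb E\snorm g_1=\sqrt{2/\pi}\,n$ and $F(C_{\ref{prop:gauss-program}})\ge(\sqrt{2/\pi}-\delta_{C_{\ref{prop:gauss-program}}})n$, the total variation of $F$ on $[C_{\ref{prop:gauss-program}},2C_{\ref{prop:gauss-program}}]$ is at most $\delta_{C_{\ref{prop:gauss-program}}}n$.

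Next I would pass to a high-probability event. Fix a small absolute constant $\zeta>0$, then $\eta:=\eta_{\ref{prop:gauss-program}}>0$ small enough that $\binom n{\lfloor\eta n\rfloor}\le e^{\zeta^2n/4}$, and $C_{\ref{prop:gauss-program}}$ large enough that $\delta_{C_{\ref{prop:gauss-program}}}$ and the error in \cref{lem:large-C} applied with constant $\tfrac12c_2C_{\ref{prop:gauss-program}}$ to $\lceil(1-\eta)n\rceil$ variables are both $\le\zeta^2$, where $c_2$ is a suitable absolute constant supplied by \cref{lem:submultiplicative}. By \cref{lem:submultiplicative} and the total-variation bound, $\mb E_{C_{\on{Alg}}}[F(C_{\on{Alg}}(1+c_2\epsilon))-F(C_{\on{Alg}})]\le\epsilon\delta'n$ with $\delta'\le\zeta^2$, so with probability $\ge9/10$ over $C_{\on{Alg}}$ the event $\mc E_C:=\{F(C_{\on{Alg}}(1+c_2\epsilon))\le F(C_{\on{Alg}})+\epsilon\delta'n\}$ holds; condition on such a $C_{\on{Alg}}$. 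Let $\mc E_g$ be the event that $\on{OPT}_{C_{\on{Alg}}}(g)\ge F(C_{\on{Alg}})-\epsilon^2n$, that $\on{OPT}_{C_{\on{Alg}}(1+c_2\epsilon)}(g)\le F(C_{\on{Alg}}(1+c_2\epsilon))+\epsilon^2n$, and that for \emph{every} $M\subseteq[n]$ with $|M|=\lceil(1-\eta)n\rceil$,
\[\on{OPT}^M(g):=\max\{\sang{g_M,v}:v\in[-1,1]^M,\ \snorm{A_Mv}_\infty\le\tfrac12c_2C_{\ref{prop:gauss-program}}\sqrt{|M|\log(m/|M|+2)}\}\ge(\sqrt{2/\pi}-\zeta^2)|M|-\zeta n.\]
The first two clauses concern two fixed radii and fail with probability $e^{-\Omega(\epsilon^4n)}$ by the Lipschitz bound; the third is a union bound over $\binom n{\lfloor\eta n\rfloor}\le e^{\zeta^2n/4}$ sets, each failing with probability $\le e^{-\zeta^2n/2}$ by \cref{lem:large-C} and concentration, hence failing overall with probability $\le e^{-\zeta^2n/4}$. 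Thus $\Pr[\mc E_C\cap\mc E_g]\ge4/5$ for $n$ large.

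The core step is to show that on $\mc E_C\cap\mc E_g$ the conclusion holds. Suppose instead that $x^*\in\Gamma_{C_{\on{Alg}}}$ satisfies $\sang{g,x^*}\ge\on{OPT}_{C_{\on{Alg}}}(g)-\epsilon\eta n$ (the bullet's hypothesis, unscaled) yet $|\{j:|x^*_j|\ge1-\epsilon\}|<\eta n$, so $\{j:|x^*_j|<1-\epsilon\}$ contains some $M$ with $|M|=\lceil(1-\eta)n\rceil$. Take $v^*\in[-1,1]^M$ attaining $\on{OPT}^M(g)$, extend by $0$ off $M$, and set $x':=x^*+\epsilon v^*$. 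On $M$ we have $|x^*_j|<1-\epsilon$ and $|\epsilon v^*_j|\le\epsilon$, so $\snorm{x'}_\infty\le1$, and $\snorm{Ax'}_\infty\le\snorm{Ax^*}_\infty+\epsilon\snorm{A_Mv^*}_\infty\le C_{\on{Alg}}T+\tfrac12c_2C_{\ref{prop:gauss-program}}\epsilon\sqrt{|M|\log(m/|M|+2)}\le C_{\on{Alg}}(1+c_2\epsilon)T$, using $C_{\on{Alg}}\ge C_{\ref{prop:gauss-program}}$ and $\sqrt{|M|\log(m/|M|+2)}\le2T$; hence $x'\in\Gamma_{C_{\on{Alg}}(1+c_2\epsilon)}$. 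Chaining the three clauses of $\mc E_g$ with $\mc E_C$,
\[\sang{g,x'}=\sang{g,x^*}+\epsilon\on{OPT}^M(g)\ge F(C_{\on{Alg}}(1+c_2\epsilon))+\epsilon n\big[(\sqrt{2/\pi}-\zeta^2)(1-\eta)-\zeta-\delta'-\epsilon-\eta\big],\]
and the bracket exceeds some absolute constant $c_3>0$ given the choices of $\zeta,\eta$ and the smallness of $\delta',\epsilon$. But $x'\in\Gamma_{C_{\on{Alg}}(1+c_2\epsilon)}$ forces $\on{OPT}_{C_{\on{Alg}}(1+c_2\epsilon)}(g)\ge\sang{g,x'}$, which with the second clause of $\mc E_g$ gives $c_3\epsilon n\le\epsilon^2n$ --- false for $n$ large. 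Hence every near-optimal point has $\ge\eta n$ coordinates of absolute value $\ge1-\epsilon$, which is the second bullet with $\eta_{\ref{prop:gauss-program}}=\eta$.

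The main obstacle is exactly the core step: converting "far from every vertex of the cube on $\ge(1-\eta)n$ coordinates'' into a concrete gain of order $n/\log n$. Pushing $x^*$ directly toward $\on{sign}(g)$ fails, since it can inflate $\snorm{Ax}_\infty$ by $\Theta(\epsilon n)\gg\epsilon T$; one must instead perturb along the maximizer of the Gaussian functional over the \emph{restricted} Spencer body on the free coordinate set $M$, whose value is $\tilde{\Omega}(n)$ by \cref{lem:large-C}. Because $M$ depends on $g$, the concentration of $\on{OPT}^M$ has to be union-bounded over all $\binom n{\eta n}$ admissible $M$, which is only affordable because $\eta$ is a small absolute constant --- and since the proposition only asks for \emph{some} $\eta>0$, this is acceptable. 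Randomizing $C_{\on{Alg}}$ is likewise essential: for a single worst-case $C$ the increment $F(C(1+c_2\epsilon))-F(C)$ could be as large as the gain being extracted, whereas across $C\in[C_{\ref{prop:gauss-program}},2C_{\ref{prop:gauss-program}}]$ the total variation of $F$ is only $\delta_{C_{\ref{prop:gauss-program}}}n$, so a random $C_{\on{Alg}}$ has a negligible local increment with good probability.
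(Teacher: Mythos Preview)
Your proposal is correct and follows essentially the same approach as the paper's proof: choose $C_{\on{Alg}}$ via \cref{lem:submultiplicative} so that $F(C_{\on{Alg}}(1+\Theta(\epsilon)))-F(C_{\on{Alg}})$ is small, union-bound \cref{lem:large-C} over all $(1-\eta)n$-subsets of columns, and derive a contradiction by perturbing a hypothetical bad near-optimizer $x^*$ along the restricted-body maximizer to get a point in $\Gamma_{C_{\on{Alg}}(1+\Theta(\epsilon))}$ with value exceeding what Gaussian concentration allows. The only stylistic difference is that the paper compares the \emph{random} optima $\on{OPT}_{C_{\on{Alg}}}(g)$ and $\on{OPT}_{C_{\on{Alg}}(1+\epsilon)}(g)$ directly (so your first clause of $\mc E_g$ is unnecessary there), whereas you route both through their expectations $F(\cdot)$; and the paper takes the restricted body to be $\Gamma_{A,C_{\on{Alg}}}$ intersected with a coordinate subspace rather than introducing a separate constant $\tfrac12c_2C_{\ref{prop:gauss-program}}$, but these are equivalent up to bookkeeping.
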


\begin{proposition}\label{prop:solve-program}
For any $C\le \sqrt{\log n}/3$ and $n \geq 100$, there exists a randomized algorithm $\on{Solve}$ such that the following holds. On input 
\begin{itemize}
    \item a matrix $A\in \mb{R}^{m\times n}$ with $m \leq n^{2}$, $\snorm{A}_{1\to \infty}\le 1$, and $\max_{i\in [n]}|\on{supp}(Ae_i)|\le k$, and 
    \item a vector $v \in \mb{R}^{n}$ such that $\snorm{{v}}_{2}\le 2\sqrt{n}$, 
\end{itemize}
$\on{Solve}(A,v)$ runs in time $\tilde{O}(kn + n^2)$ and with probability at least $99/100$, outputs
\[z \in \Gamma_{A,C} := \{{x}:\snorm{{x}}_{\infty}\le 1 \wedge \snorm{A{x}}_{\infty}\le C\sqrt{n\log(m/n + 2)}\}\]
such that
\[\sang{{v}, {z}}/\sqrt{n} \geq \sup_{{y}\in \Gamma_{A, C}} \sang{{v},{y}}/\sqrt{n} - \epsilon \sqrt{n},\]
where $\epsilon = 1/(\log n)^2$.
\end{proposition}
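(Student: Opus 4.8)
The plan is to solve the linear program $\sup\{\langle v, x\rangle/\sqrt n : x\in\Gamma_{A,C}\}$ to additive error $\sqrt n/(\log n)^2$ by the multiplicative weights update (MWU) method (see, e.g., \cite{AHK12}), run inside a binary search, with a per-iteration oracle engineered to have small ``width.'' First, $\mathrm{OPT} := \sup_{y\in\Gamma_{A,C}}\langle v, y\rangle/\sqrt n$ lies in $[0,2\sqrt n]$, since $0\in\Gamma_{A,C}\subseteq[-1,1]^n\subseteq\sqrt n B_2^n$ and $\|v\|_2\le 2\sqrt n$; binary searching over $O(\log n)$ thresholds reduces matters, for each threshold $\tau$, to either finding $x\in[-1,1]^n$ with $\langle v/\sqrt n, x\rangle\ge\tau$ and $\|Ax\|_\infty\le C'\sqrt{n\log(m/n+2)}$ up to additive slack $\tilde{O}(\sqrt n)$ on each inequality (the objective's being the tightest, of order $\sqrt n/(\log n)^2$), or to certifying that $\Gamma_{A,C}$ attains no value $\tau$. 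Here $C':=C(1-1/\mathrm{poly}(\log n))$ is a slight shrink of $C$: by scaling, $\Gamma_{A,C'}\supseteq(C'/C)\Gamma_{A,C}$, so this loses only $O(\sqrt n/\mathrm{poly}(\log n))$ in the optimum, and it leaves room for the output to land in $\Gamma_{A,C}$ exactly despite the discrepancy slack. Writing $\widehat A$ for the $2m\times n$ matrix with rows $\pm A_i$ (so that $\|Ax\|_\infty\le t\iff\widehat Ax\le t\mathbbm{1}$, and each column of $\widehat A$ has at most $2k$ nonzeros), MWU over the resulting $2m+1$ constraints runs for $T=\tilde{O}(w^2/n)$ rounds, where $w$ bounds $|(\widehat Ax')_i|$ and $|\langle v/\sqrt n, x'\rangle|$ over the oracle outputs $x'$. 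As $x'\in\sqrt n B_2^n$ forces $|\langle v/\sqrt n, x'\rangle|\le 2\sqrt n$, and $C\le\sqrt{\log n}/3$ together with $m\le n^2$ forces $C\sqrt{n\log(m/n+2)}=\tilde{O}(\sqrt n)$, the entire difficulty is to build an oracle whose outputs obey $\|Ax'\|_\infty=\tilde{O}(\sqrt n)$ --- the naive greedy coordinate-by-coordinate oracle only guarantees $\|Ax'\|_\infty=O(n)$, hence $T=\tilde{O}(n)$ --- for then $w=\tilde{O}(\sqrt n)$ and $T=\tilde{O}(1)$.

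For the width-reduced oracle, let $\rho$ be the round's probability vector over the $2m+1$ constraints, let $\ell_\rho(x):=-\rho_0\langle v/\sqrt n, x\rangle+\sum_{i\ge1}\rho_i\langle\widehat A_i, x\rangle$ be the combined linear form, and let $r_\rho$ be its combined right-hand side. Instead of seeking a single point with $\ell_\rho(x')\le r_\rho$, solve --- approximately, to additive error $\sqrt n/(2\log n)$ --- the convex program
\[
\min_{x'\in[-1,1]^n}\ \ell_\rho(x')+\lambda\|Ax'\|_\infty,\qquad \lambda=(\log n)^{-C''},
\]
for a suitable absolute constant $C''$. Two elementary facts drive the analysis. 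First, feasibility of $\Gamma_{A,C'}$ at the current threshold produces a point $x^*$ with $\ell_\rho(x^*)\le r_\rho$ and $\|Ax^*\|_\infty=\tilde{O}(\sqrt n)$, so once $\lambda\cdot\tilde{O}(\sqrt n)\le\sqrt n/(2\log n)$ the program value is $\le r_\rho$; hence the returned $x'$ satisfies $\ell_\rho(x')\le\ell_\rho(x')+\lambda\|Ax'\|_\infty\le r_\rho$, i.e., $x'$ always satisfies the combined MWU constraint. Second, testing $x'=0$ shows the value is $\le0$, so the returned $x'$ also obeys $\ell_\rho(x')+\lambda\|Ax'\|_\infty\le\sqrt n/(2\log n)$. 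If $\|Ax'\|_\infty=\tilde{O}(\sqrt n)$, return $x'$; it has width $\tilde{O}(\sqrt n)$. Otherwise $\|Ax'\|_\infty$ is so large that the second fact, together with $|\rho_0\langle v/\sqrt n, x'\rangle|\le2\sqrt n$, forces $\ell_\rho(x')\ll-\lambda\|Ax'\|_\infty\ll-\sqrt n$; return instead $x'':=(C'\sqrt{n\log(m/n+2)}/\|Ax'\|_\infty)\,x'$, whose coordinates are tiny (so $x''\in[-1,1]^n$), which has $\|Ax''\|_\infty\le C'\sqrt{n\log(m/n+2)}=\tilde{O}(\sqrt n)$, and for which scaling a sufficiently negative value keeps $\ell_\rho(x'')\le r_\rho$ up to the allowed slack. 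Choosing $C''$ so that this last inequality holds for every $\rho$ --- in particular tracking the sign and magnitude of $r_\rho$, which can be mildly negative when $\rho$ concentrates on the objective row --- is the bookkeeping I expect to require the most care. In all cases the oracle output has $\|Ax'\|_\infty=\tilde{O}(\sqrt n)$, so $w=\tilde{O}(\sqrt n)$ and $T=\tilde{O}(1)$.

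It remains to solve the per-round convex program in near input-sparsity time. Written as a saddle-point problem it is the $\ell_\infty$--$\ell_1$ matrix game $\min_{x'\in[-1,1]^n}\max_{y\in\Delta_{2m+1}}\langle Mx', y\rangle$, where the rows of $M$ are the $\widehat A_i$ (scaled by $\lambda$) together with the combined objective row; I would solve it by the stochastic primal-dual scheme of Clarkson, Hazan, and Woodruff \cite{CHW12}, using the entropic mirror map on the dual simplex and --- the key point --- viewing $[-1,1]^n$ as a subset of the Euclidean ball $\sqrt n B_2^n$ with ordinary coordinatewise projection on the primal, which sidesteps the absence of an $\tilde{\Omega}(1)$-strongly convex mirror map on the $\ell_\infty$ ball at a cost controlled by $\max_i\|A_i\|_2=O(\sqrt n)$. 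For the demanded accuracy the number of stochastic iterations is $\tilde{O}(1)$, each sampling a dual coordinate and updating through the variance-reduced gradient estimator of \cite{CHW12}; since every column of $\widehat A$ has at most $2k$ nonzeros, a simple data structure maintaining the primal vector, the dual weights, and the row inner products supports all iterations in total time $\tilde{O}(kn+n^2)$. Composing with the $\tilde{O}(1)$ MWU rounds and the $O(\log n)$ binary-search steps gives running time $\tilde{O}(kn+n^2)$, and a union bound over the $\tilde{O}(1)$ matrix-game solves yields success probability at least $99/100$; the output $z$ lies in $\Gamma_{A,C}$ (using $C'<C$) and satisfies $\langle v, z\rangle/\sqrt n\ge\mathrm{OPT}-\sqrt n/(\log n)^2$, which is exactly the accuracy consumed downstream by \cref{prop:gauss-program}. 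Besides the width-reduction bookkeeping above, the other delicate point is this sparsity-aware implementation of \cite{CHW12}, needed to reach $\tilde{O}(kn+n^2)$ rather than the naive $\tilde{O}(mn)$ or $\tilde{O}(n^3)$.
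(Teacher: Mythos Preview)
Your proposal is correct and follows essentially the same route as the paper. The paper packages the argument as a chain of reductions (\cref{prop:solve-program} $\Leftarrow$ \cref{prop:oracle} $\Leftarrow$ \cref{prop:optimization} $\Leftarrow$ \cref{prop:opt-actual}), but the content is the same: binary search for the optimum, MWU with a width-reduced oracle obtained by adding a $\lambda\|Ax\|_\infty$ penalty to the combined linear form, the two-case analysis (if $\|Ax'\|_\infty$ is already $\tilde O(\sqrt n)$ return $x'$; otherwise rescale), and solving the resulting min--max problem via the Clarkson--Hazan--Woodruff scheme with the cube viewed inside the $\ell_2$ ball of radius $\sqrt n$. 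The paper takes the specific penalty $\delta=1/(\log n)^4$ and rescales the final output by $(1-1/(\log n)^3)$ to land in $\Gamma_{A,C}$, rather than shrinking $C$ in advance as you do; these are equivalent.

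One small slip: the number of CHW12 stochastic iterations per oracle call is not $\tilde O(1)$ but $\tilde O(n)$. After rescaling the matrix game so that all rows have $\ell_2$ norm at most $1$ (dividing through by $2\sqrt n$, as in \cref{prop:opt-actual}), the required additive accuracy becomes $\epsilon=\Theta(1/(\sqrt n\,\mathrm{poly}(\log n)))$, hence $T=\epsilon^{-2}\log m=\tilde O(n)$ iterations, each costing $\tilde O(k+n)$ via the sparse-update data structure. Your stated total runtime $\tilde O(kn+n^2)$ is correct; only the iteration count is misstated.
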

\begin{remark}
Since ${0}\in \Gamma_{A,C} \subseteq \{y:\snorm{y}_2\le \sqrt{n}\}$, we have 
\begin{align}\label{eq:trivial-bound}
0\le \sup_{y\in \Gamma_{A, C}} \sang{v,y}/\sqrt{n}\le \sup_{\snorm{y}_2\le \sqrt{n}} \sang{v,y}/\sqrt{n}\le 2\sqrt{n}.
\end{align}
\end{remark}

We conclude this section by showing how to deduce \cref{thm:partial-coloring} from the above propositions.
\begin{proof}[{Proof of \cref{thm:partial-coloring}}]
By replacing $A$ by $A\Lambda$ (which can be computed in time $O(\on{nnz}(A) + n)$ and satisfies the same guarantee $\snorm{A\Lambda}_{1\to \infty}\le 1$), it suffices to consider the case when $\Lambda = I_{n}$. We may also assume that $m \leq n^{2}$ since if $m > n^{2}$, a direct application of Bernstein's concentration inequality and the union bound shows that a uniformly random assignment $v \sim \{\pm 1\}^{n}$ satisfies the conclusion of \cref{thm:partial-coloring} with high probability. 

Now, given $A \in \mb{R}^{m\times n}$ with $\|A\|_{1 \to \infty} \leq 1$ and $\Lambda = I_{n}$, define 
\[\mc{C}_{\on{Heavy}} := \{i: \on{supp}(Ae_i)\ge (\log n)\on{nnz}(A)/n\}.\]
We decompose $A$ into $A_1$ and $A_2$ with columns indexed by $\mc{C}_{\on{Heavy}}$ and $[n]\setminus \mc{C}_{\on{Heavy}}$ respectively. 

Note that $A_1\in \mb{R}^{m\times \mc{C}_{\on{Heavy}}}$, and by Markov's inequality, $|\mc{C}_{\on{Heavy}}|\le n/\log n$. Therefore, by Bernstein's inequality and a union bound, a uniformly random $v_1\in \{\pm 1\}^{\mc{C}_{\on{Heavy}}}$ satisfies $\snorm{A_1v_1}_{\infty}\le C\sqrt{n}$ with probability at least $9/10$, for a sufficiently large absolute constant $C > 0$.  

Thus it suffices to find a suitable coloring for $A_2\in \mb{R}^{m\times ([n]\setminus \mc{C}_{\on{Heavy}})}$. By \cref{prop:gauss-program,prop:solve-program}, in time $\tilde{O}(n\cdot ((\log n)\on{nnz}(A)/n) + n^2) = \tilde{O}(\on{nnz}(A) + n^2)$ and with probability at least $3/4$, we can find $v_{2}' \in [-1,1]^{[n] \setminus \mc{C}_{\on{Heavy}}}$ such that 
\begin{itemize}
\item $\|A_{2}v_{2}'\|_{\infty} \leq 2C_{\ref{prop:gauss-program}}\sqrt{n\log(2m/n + 2)}$, and 
\item $T: = \{j: |(v_2')_{j}| \geq 1-2/\log{n}\}$ satisfies $|T|\geq \eta_{\ref{prop:gauss-program}}n/2.$
\end{itemize}

Let $v_{2}$ be the random vector obtained from $v_2'$ by randomly rounding each coordinate in $T$ independently to $\{\pm 1\}$ in a manner such that $\mb{E}[(v_{2})_j] = (v_2')_{j}$ for all $j \in [T]$. Then, by Bernstein's inequality and the union bound, it follows that with high probability,
\begin{itemize}
\item $\|A_{2}v_{2}\|_{\infty} \leq 4C_{\ref{prop:gauss-program}}\sqrt{n\log(2m/n + 2)}$, and 
\item $T: = \{j: |(v_2)_{j}| =1 \}$ satisfies $|T|\geq \eta_{\ref{prop:gauss-program}}n/4,$
\end{itemize}
so that $v = (v_1, v_2)$ can be obtained in time $\tilde{O}(\on{nnz}(A) + n^2)$, and with probability at least $1/2$, satisfies the conclusion of \cref{thm:partial-coloring}. 
\end{proof}

\section{Stability of the linear program}\label{sec:gauss-stable}

The goal of this section is to prove \cref{prop:gauss-program}, which states that the linear program of Eldan and Singh \cite{ES18} is logarithmically stable, in the sense that all points with objective value within a $(1-c_{1}/\log{n})$-factor of the optimum have linearly many coordinates with absolute value at least $1-c_2/\log{n}$. 

As in the statement of the lemma, fix a matrix $A\in \mb{R}^{m\times n}$ such that $\snorm{A}_{1\to\infty}\le 1$, and for $C > 0$, define 
\[\Gamma_{A, C} := \{x\in \mb{R}^n:  \snorm{Ax}_{\infty}\le C\sqrt{n\log(m/n + 2)} \wedge \snorm{x}_{\infty}\le 1\}.\]
Furthermore, define 
\begin{equation}\label{eq:lin-prog}
\on{OPT}_{A}(C) := \mbm{E}_{g\sim \mc{N}(0,I_n)}\left[\max_{y\in \Gamma_{A,C}}\sang{g,y}\right]/\sqrt{n}.
\end{equation}

Our proof of \cref{prop:gauss-program} requires a few ingredients. First, we record a trivial upper bound on $\on{OPT}_{A}(C)$.

\begin{lemma}\label{lem:trivial-bound}
Fix $A\in \mb{R}^{m \times n}$ such that $\snorm{A}_{1\to\infty}\le 1$. For any $C > 0$,
\[\on{OPT}_{A}(C)\le \sqrt{2n/\pi}.\]
\end{lemma}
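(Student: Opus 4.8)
The plan is to bound $\max_{y \in \Gamma_{A,C}} \langle g, y\rangle$ by enlarging the feasible region: since $\Gamma_{A,C} \subseteq [-1,1]^n$ (we simply drop the constraint $\|Ax\|_\infty \le C\sqrt{n\log(m/n+2)}$), we have
\[
\max_{y \in \Gamma_{A,C}} \langle g, y\rangle \;\le\; \max_{y \in [-1,1]^n} \langle g, y\rangle \;=\; \sum_{j=1}^n |g_j| \;=\; \|g\|_1.
\]
Taking expectations and dividing by $\sqrt{n}$, it remains to show $\mathbb{E}_{g \sim \mathcal{N}(0,I_n)}[\|g\|_1] \le n\sqrt{2/\pi}$. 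But this holds with equality: by linearity of expectation $\mathbb{E}[\|g\|_1] = \sum_{j=1}^n \mathbb{E}[|g_j|] = n\,\mathbb{E}[|g_1|]$, and the first absolute moment of a standard Gaussian is exactly $\sqrt{2/\pi}$ (integrate $\int_0^\infty t e^{-t^2/2}\,dt = 1$ and multiply by $2/\sqrt{2\pi}$). Therefore
\[
\on{OPT}_A(C) = \frac{1}{\sqrt{n}}\,\mathbb{E}\Big[\max_{y \in \Gamma_{A,C}} \langle g, y\rangle\Big] \;\le\; \frac{1}{\sqrt{n}}\cdot n\sqrt{2/\pi} \;=\; \sqrt{2n/\pi},
\]
as claimed.

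There is essentially no obstacle here: the only two inputs are the coordinate-wise maximization of a linear functional over the cube (optimum at a vertex, with $y_j = \operatorname{sign}(g_j)$) and the elementary computation of $\mathbb{E}[|g_1|]$. I would present it in two or three lines. The one sanity check worth noting is that $\Gamma_{A,C}$ is genuinely contained in $[-1,1]^n$ because of the explicit constraint $\|x\|_\infty \le 1$ in its definition, so the relaxation step is valid regardless of $A$ and $C$; this is why the bound is uniform in both. (It is also the place where the factor $\sqrt{2/\pi}$ — the exact expected value that $\on{OPT}_A(C)$ will later be shown to nearly attain for large $C$ in Lemma~\ref{lem:large-C} — first appears.)
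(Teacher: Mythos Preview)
Your proof is correct and is essentially identical to the paper's own argument: relax $\Gamma_{A,C}$ to $[-1,1]^n$, observe that the maximum of $\sang{g,y}$ over the cube is $\snorm{g}_1$, and use $\mb{E}|g_1| = \sqrt{2/\pi}$.
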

\begin{proof}
Since $\Gamma_{A,C} \subseteq [-1,1]^{n}$, we have 
\[\on{OPT}_{A}(C)\le \mbm{E}_{g\sim \mc{N}(0,I_n)}\left[\max_{\snorm{y}_{\infty}\le 1}\sang{g,y}\right]/\sqrt{n} = \sqrt{n}\mb{E}_{g\sim \mc{N}(0,1)}[|g|] = \sqrt{2n/\pi}. \qedhere\]
\end{proof}

Next, we record a consequence of Spencer's proof of Spencer's Theorem \cite{Spe85}. 

\begin{lemma}[{\cite[Theorem~12]{Spe85}}]\label{thm:many-points}
Fix $A\in \mb{R}^{m \times n}$ such that $\snorm{A}_{1\to\infty}\le 1$ and let $C > 0$. There exists a constant $\delta_C > 0$ (depending only on $C$ and independent of $A$) such that 
\[|\Gamma_{A, C} \cap \{\pm 1\}^n|\ge (2-\delta_C)^n,\]
where $\delta_C\to 0$ as $C\to \infty$.
\end{lemma}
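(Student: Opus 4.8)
The plan is to reduce the statement to the standard counting form of Spencer's theorem (the "partial coloring lemma" / supersaturation statement in \cite[Theorem~12]{Spe85}), which asserts that for any collection of $m$ linear forms $\ell_1,\dots,\ell_m$ on $\{\pm 1\}^n$ given by coefficient vectors of $\ell_\infty$-norm at most $1$ (the rows of $A$), the number of sign vectors $x\in\{\pm1\}^n$ with $\max_i|\ell_i(x)|\le \lambda\sqrt{n}$ is at least $(2-\delta)^n$ where $\delta=\delta(\lambda,m/n)\to 0$ as $\lambda\to\infty$ uniformly in $m$. Concretely, the entropy-counting argument behind Spencer's theorem shows that with $\lambda = C\sqrt{\log(m/n+2)}$, the fraction of colorings that violate \emph{some} constraint $|\langle A_i,x\rangle|>C\sqrt{n\log(m/n+2)}$ decays like $2^{-c(C)n}$ for a constant $c(C)$ that is increasing in $C$; hence the number of \emph{good} colorings is at least $(2-\delta_C)^n$ with $\delta_C\to 0$ as $C\to\infty$. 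The set $\Gamma_{A,C}\cap\{\pm1\}^n$ is exactly this set of good colorings, so this gives the claim.

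The steps, in order, would be: (i) observe that a vertex $x\in\{\pm1\}^n$ lies in $\Gamma_{A,C}$ iff $|\langle A_i,x\rangle|\le C\sqrt{n\log(m/n+2)}$ for all $i\in[m]$, so $|\Gamma_{A,C}\cap\{\pm1\}^n|$ is precisely the number of $x$ satisfying all these bounds; (ii) for a single row $A_i$, bound $\Pr_{x\sim\{\pm1\}^n}[|\langle A_i,x\rangle|>t\sqrt{n}]\le 2\exp(-t^2/2)$ by Hoeffding, using $\|A_i\|_2\le\sqrt{n}$ (which follows from $\|A_i\|_\infty\le 1$); (iii) when $m\le e^{n}$ — and in particular one only cares about the regime where a random coloring does not already succeed — a naive union bound over $m$ rows with $t=C\sqrt{\log(m/n+2)}$ only gives that the bad probability is at most $2m\exp(-C^2\log(m/n+2)/2)$, which is not $o(1)$ for large $m$; so instead invoke Spencer's sharper counting bound directly, which replaces the union bound by the entropy argument and yields a bad \emph{count} of at most $\delta_C^n\cdot 2^n$ (equivalently, the number of good colorings is at least $(2-\delta_C)^n$) with $\delta_C\to 0$ as $C\to\infty$, uniformly in $A$ and in $m$; (iv) conclude. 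Since the lemma is explicitly cited as \cite[Theorem~12]{Spe85}, the honest write-up is essentially a translation of notation: matching Spencer's "$m$ sets, $n$ points, discrepancy threshold" to "$m$ rows of $A$, $n$ columns, threshold $C\sqrt{n\log(m/n+2)}$", and noting that Spencer's theorem is stated for set systems (0/1 rows) but the proof goes through verbatim for real matrices with $\|A\|_{1\to\infty}\le 1$ since the only property used is $\|A_i\|_2\le\sqrt n$ together with boundedness of the entries.

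The main obstacle is making precise the claim that $\delta_C$ can be taken \emph{independent of $m$ and $A$}: for $m\gg n$ one needs the threshold $C\sqrt{n\log(m/n+2)}$ rather than $C\sqrt n$, and one must check that Spencer's entropy/pigeonhole argument, run with this $m$-dependent threshold, produces a $\delta_C$ depending only on the single parameter $C$ and not on $m/n$. This is exactly the content of the stronger form of Spencer's theorem (the "$\log(m/n+2)$" version, cf.\ \cite[Theorem~13.4.1]{AS16}), and the key point is that in the entropy computation the contribution of each constraint to the "loss" is governed by $\exp(-\Theta(C^2\log(m/n+2)))=(m/n+2)^{-\Theta(C^2)}$, and summing the resulting geometric-type series over the $m$ constraints grouped by scale gives a bound depending only on $C$. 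I would therefore either cite \cite[Theorem~12]{Spe85} (or \cite[Theorem~13.4.1]{AS16}) in precisely this uniform form, or include a short appendix re-deriving it; given that the paper already cites it as a black box, a one-paragraph proof quoting the reference and performing the notational identification suffices.
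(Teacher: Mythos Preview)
Your proposal is correct and aligns with the paper's treatment: the paper does not prove this lemma at all but simply cites \cite[Theorem~12]{Spe85} as a black box, adding only a one-line remark that Spencer's statement is for $m=n$ and that ``a trivial modification of the proof however immediately gives the desired result.'' Your plan---identify $\Gamma_{A,C}\cap\{\pm1\}^n$ with the set of good colorings, invoke Spencer's entropy/counting argument, and check that the $\log(m/n+2)$ threshold makes $\delta_C$ independent of $m$---is exactly the content of that remark, spelled out in more detail than the paper bothers with.
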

\begin{remark}
In \cite[Theorem~12]{Spe85}, the corresponding result is only stated for $m = n$. A trivial modification of the proof however immediately gives the desired result. 
\end{remark}

We will also use concentration of Lipschitz functions with respect to the Gaussian measure. 

\begin{lemma}[see, e.g., {\cite[Theorem~2.2]{ES18}}]
\label{thm:gauss-concentration}
Let $f:\mb{R}^n\to\mb{R}$ be $L$-Lipschitz with respect to the Euclidean norm. Then
\[\mb{P}_{g\sim \mc{N}(0,1)^{\otimes n}}[|f(g)-\mb{E}[f(g)]|\ge t]\le 2\exp(-t^2/(2L^2)).\]
\end{lemma}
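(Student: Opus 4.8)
The plan is to prove this classical estimate via the Herbst argument, which delivers exactly the sharp sub-Gaussian constant appearing in the statement. Write $\gamma_n := \mc{N}(0,1)^{\otimes n}$. First I would reduce to the centered, smooth case: replacing $f$ by $f - \mb{E}_{g \sim \gamma_n}[f(g)]$ keeps it $L$-Lipschitz and does not change $|f(g) - \mb{E}[f(g)]|$, and mollifying produces smooth approximants $f_\epsilon$ with $\snorm{\nabla f_\epsilon}_2 \le L$ pointwise and $\mb{E}[f_\epsilon] \to 0$; since the asserted tail bound passes to such limits (Fatou), I may assume $f \in C^\infty$ with $\snorm{\nabla f(x)}_2 \le L$ for every $x$.

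The engine is the Gaussian logarithmic Sobolev inequality $\on{Ent}_{\gamma_n}(h^2) \le 2\,\mb{E}_{\gamma_n}[\snorm{\nabla h}_2^2]$ for smooth $h$, which I would either quote or recall via one of its two standard derivations: tensorization from the one-dimensional Gaussian LSI using subadditivity of entropy over product measures, or differentiation along the Ornstein--Uhlenbeck semigroup. Applying it to $h = e^{\lambda f/2}$ converts the entropy of $e^{\lambda f}$ into a differential inequality for $\psi(\lambda) := \mb{E}_{\gamma_n}[e^{\lambda f}]$, namely $\lambda \psi'(\lambda) - \psi(\lambda)\log\psi(\lambda) \le \tfrac12 \lambda^2 L^2 \psi(\lambda)$. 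Dividing through by $\lambda^2\psi(\lambda)$ shows that $\lambda \mapsto \lambda^{-1}\log\psi(\lambda)$ has derivative at most $L^2/2$ and limit $\mb{E}[f] = 0$ as $\lambda \to 0^+$, so integrating yields the sub-Gaussian moment bound $\mb{E}_{\gamma_n}[e^{\lambda f}] \le e^{\lambda^2 L^2/2}$ for all $\lambda \in \mb{R}$ (the range $\lambda < 0$ by applying the bound to $-f$).

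From there a Chernoff bound with the optimal choice $\lambda = t/L^2$ gives the one-sided tail $\mb{P}[f(g) - \mb{E}[f(g)] \ge t] \le e^{-t^2/(2L^2)}$, and a union bound with the identical estimate applied to $-f$ yields the two-sided conclusion $\mb{P}[|f(g) - \mb{E}[f(g)]| \ge t] \le 2e^{-t^2/(2L^2)}$.

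The only step I expect any friction from is securing the \emph{sharp} constant: the soft route of interpolating between two independent Gaussians and applying Jensen along the interpolation path only produces $e^{-ct^2/L^2}$ with $c = 2/\pi^2 < 1/2$, which is why I would go through the log-Sobolev inequality (or, alternatively, through the Gaussian isoperimetric inequality of Borell and Sudakov--Tsirelson, at the cost of a short median-to-mean comparison). The mollification reduction is routine but worth spelling out so that the bound $\mb{E}_{\gamma_n}[\snorm{\nabla f}_2^2 e^{\lambda f}] \le L^2\psi(\lambda)$ is literally justified. In any case this is a textbook result, so for the paper a pointer such as \cite[Theorem~2.2]{ES18} suffices.
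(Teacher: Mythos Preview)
Your proof via the Herbst argument from the Gaussian log-Sobolev inequality is correct and is indeed the standard route to the sharp constant $1/(2L^2)$. The paper, however, does not prove this lemma at all: it is stated as a known result with the reference \cite[Theorem~2.2]{ES18} and no argument is given. So your final remark is exactly right --- in the paper's context a citation suffices, and what you have written is simply the textbook proof that the cited reference (or any standard concentration-of-measure text) would supply.
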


The previous two lemmas allow us to show that for sufficiently large $C$, the trivial upper bound in \cref{lem:trivial-bound} is close to sharp.
\begin{lemma}\label{lem:large-C}
Fix $A\in \mb{R}^{m \times n}$ such that $\snorm{A}_{1\to\infty}\le 1$ and let $C > 0$. There exists a constant $\delta'_C > 0$ (depending only on $C$ and independent of $A$) such that
\[\on{OPT}_{A}(C)\ge \sqrt{2n/\pi} - \delta'_C\sqrt{n},\]
where $\delta'_C\to 0$ as $C\to \infty$.
\end{lemma}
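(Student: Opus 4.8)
The plan is to combine the supersaturation count of \cref{thm:many-points} with Gaussian concentration (\cref{thm:gauss-concentration}), the crucial point being that for $g\sim\mc N(0,I_n)$ the sign pattern $\on{sign}(g)\in\{\pm1\}^n$ is uniformly distributed on $\{\pm1\}^n$ and is independent of the magnitude $\snorm{g}_1=\sum_{j}|g_j|$. I would set $V:=\Gamma_{A,C}\cap\{\pm1\}^n$ and $F(g):=\max_{v\in V}\sang{g,v}$, noting that $F(g)\le\max_{y\in\Gamma_{A,C}}\sang{g,y}$ and that $F$ is $\sqrt n$-Lipschitz with respect to the Euclidean norm, being a maximum of the $\sqrt n$-Lipschitz linear functionals $g\mapsto\sang{g,v}$ with $v\in\{\pm1\}^n$. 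It then suffices to prove $\mb E[F]\ge(\sqrt{2/\pi}-\delta'_C)n$ with $\delta'_C\to0$ as $C\to\infty$, since $\on{OPT}_A(C)\ge\mb E[F]/\sqrt n$.

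The first step is to observe that $V$ is hit by $\on{sign}(g)$ with probability only mildly exponentially small: by \cref{thm:many-points} (and assuming $\delta_C\le1$, which holds once $C$ is large), $\mb P[\on{sign}(g)\in V]=|V|/2^n\ge(1-\delta_C/2)^n\ge e^{-\delta_C n}$, while on the event $\{\on{sign}(g)\in V\}$ one has $F(g)\ge\sang{g,\on{sign}(g)}=\snorm{g}_1$. Next I would fix $\delta''_C:=\sqrt{\delta_C}$ and use that $\snorm{g}_1$ is $\sqrt n$-Lipschitz with mean $\sqrt{2/\pi}\,n$ to get, from \cref{thm:gauss-concentration}, $\mb P[\snorm{g}_1\ge(\sqrt{2/\pi}-\delta''_C)n]\ge1-2\exp(-(\delta''_C)^2n/2)\ge1/2$ once $n$ is large in terms of $\delta''_C$. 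Since $\on{sign}(g)$ and $\snorm{g}_1$ are independent,
\[\mb P\big[F(g)\ge(\sqrt{2/\pi}-\delta''_C)n\big]\ \ge\ \mb P[\on{sign}(g)\in V]\cdot\mb P\big[\snorm{g}_1\ge(\sqrt{2/\pi}-\delta''_C)n\big]\ \ge\ \tfrac12 e^{-\delta_C n}.\]

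To finish I would apply \cref{thm:gauss-concentration} to $F$ itself: if $\mb E[F]\le(\sqrt{2/\pi}-\delta''_C)n-u$ for some $u\ge0$, then $\mb P[F(g)\ge(\sqrt{2/\pi}-\delta''_C)n]\le2\exp(-u^2/(2n))$; comparing with the displayed bound forces $u^2/(2n)\le\delta_C n+2\ln2$, hence $u\le\sqrt{2\delta_C}\,n+O(\sqrt n)$. This gives $\mb E[F]\ge(\sqrt{2/\pi}-\sqrt{\delta_C}-\sqrt{2\delta_C})n-O(\sqrt n)$ and therefore $\on{OPT}_A(C)\ge(\sqrt{2/\pi}-(1+\sqrt2)\sqrt{\delta_C})\sqrt n-O(1)$; for $n$ large in terms of $C$ the additive $O(1)$ is negligible and the claim holds with $\delta'_C=O(\sqrt{\delta_C})\to0$ as $C\to\infty$.

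I do not expect a genuine obstacle with this route; the only points needing care are the two elementary facts that $\on{sign}(g)$ is uniform on $\{\pm1\}^n$ and independent of $\snorm{g}_1$ (both immediate from the product structure and symmetry of the Gaussian), and the bookkeeping of the additive $O(\sqrt n)$ error, which is lower order once $n$ is large — in keeping with the asymptotic-in-$n$ conventions already used in \cref{prop:gauss-program}. A more computational alternative would be to estimate $\mb P[F(g)\ge\tau]$ by a first/second moment argument starting from $\mb E[\#\{v\in V:\sang{g,v}\ge\tau\}]=|V|\cdot\mb P[\mc N(0,n)\ge\tau]$, but there the real difficulty would be controlling the second moment: one must bound the contribution of pairs $v,w\in V$ at small Hamming distance as well as the rare event $\snorm{g}_2\gg\sqrt n$, which is strictly more work than the concentration argument above.
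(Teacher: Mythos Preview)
Your proof is correct and follows the same approach as the paper: both note that $\on{sgn}(g)$ is uniform on $\{\pm1\}^n$, invoke \cref{thm:many-points} to lower-bound $\mb P[\on{sgn}(g)\in\Gamma_{A,C}]$, and finish with Gaussian concentration. The paper streamlines slightly by applying \cref{thm:gauss-concentration} just once, directly to the $2$-Lipschitz function $g\mapsto\max_{y\in\Gamma_{A,C}}\sang{g,y}/\sqrt n-\snorm{g}_1/\sqrt n$ (which is $\ge0$ with probability at least $\exp(-2\delta_Cn)$), so it never needs your separate appeal to the independence of $\on{sgn}(g)$ and $\snorm{g}_1$.
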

\begin{proof}
For $g \sim \mc{N}(0, I_n)$, note that $\on{sgn}(g)\in \{\pm 1\}^n$ is distributed uniformly on $\{\pm 1\}^{n}$. Furthermore if $\on{sgn}(g) \in \Gamma_{A,C}$, then 
\[\max_{y\in \Gamma_{A,C}}\sang{g,y} \ge \sang{g, \on{sgn}(g)}  = \snorm{g}_{1}.\]
In particular,
\begin{align*}
\mb{P}[\max_{y\in \Gamma_{A,C}}\sang{g,y} - \|g\|_{1}\ge 0]\ge \mb{P}[\on{sgn}(g) \in \Gamma_{A,C}] \geq \exp(-2\delta_Cn),
\end{align*}
where $\delta_C$ is as in \cref{thm:many-points}. 

Hence, since $g\mapsto \max_{y\in \Gamma_{A,C}}\sang{g,y}/\sqrt{n} - \snorm{g}_1/\sqrt{n}$ is $2$-Lipschitz with respect to the Euclidean norm, it follows from \cref{thm:gauss-concentration} that we must have 
\[\mb{E}[\max_{y\in \Gamma_{A,C}}\sang{g,y}/\sqrt{n} - \snorm{g}_1/\sqrt{n}]\ge - 8(\delta_C)^{1/2}\sqrt{n}\]
for sufficiently large $n$. The desired result now follows by the linearity of expectation.
\end{proof}

As a corollary, we deduce the following stability result for $\on{OPT}_{A}(C)$ with respect to $C$.

\begin{corollary}\label{lem:submultiplicative}
There exists an absolute constant $c_{\ref{lem:submultiplicative}} > 0$ and a non-increasing function $C_{\ref{lem:submultiplicative}} : [0,1] \to \mb{R}^{>0}$ for which the following holds. For any $0\le\epsilon\le c_{\ref{lem:submultiplicative}}$, $\delta \in [0,1]$, and $C\ge C_{\ref{lem:submultiplicative}}(\delta)$,
\[\mb{P}_{C'\sim \on{Unif}[C,2C]}[\on{Opt}_{A}(C'+C\cdot \epsilon)-\on{Opt}_{A}(C')\le \delta \cdot \epsilon\sqrt{n}]\ge 5/6.\]
\end{corollary}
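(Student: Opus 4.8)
The plan is to exploit two structural facts about the map $C \mapsto \on{OPT}_{A}(C)$ from \eqref{eq:lin-prog}. First, it is non-decreasing: if $C \le C''$ then $\Gamma_{A,C} \subseteq \Gamma_{A,C''}$, so the inner maximum $\max_{y}\sang{g,y}$ only grows. Second, by \cref{lem:trivial-bound} and \cref{lem:large-C} it lies in the narrow band $[\sqrt{2n/\pi} - \delta'_{C}\sqrt{n},\ \sqrt{2n/\pi}]$, where $\delta'_{C} \to 0$ as $C \to \infty$. Taken together, these say that across \emph{any} window $[C, C'']$ with $C$ large the cumulative increase $\on{OPT}_{A}(C'') - \on{OPT}_{A}(C)$ is at most $\delta'_{C}\sqrt{n}$, hence tiny. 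Since a non-negative quantity whose total over the cells of a partition is tiny can exceed a fixed threshold on only a tiny fraction of those cells, the corollary should follow from a one-sided Markov/pigeonhole argument applied to the discrete derivative of $\on{OPT}_{A}$ along a grid of scales of mesh $\asymp C\epsilon$, with the random $C' \sim \on{Unif}[C,2C]$ playing the role of a uniformly chosen cell.

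Concretely, I would take $c_{\ref{lem:submultiplicative}}$ to be a small absolute constant (say $1/2$), fix $\epsilon \in [0,c_{\ref{lem:submultiplicative}}]$, $\delta \in [0,1]$, and a large $C$ to be pinned down at the end, partition $[C,3C]$ into $K := \lceil 2/\epsilon\rceil$ cells of width $C\epsilon$, and set $a_{j} := \on{OPT}_{A}(C + jC\epsilon) - \on{OPT}_{A}(C + (j-1)C\epsilon) \ge 0$ for $1 \le j \le K$. Telescoping and using the band above gives $\sum_{j=1}^{K} a_{j} = \on{OPT}_{A}(C + KC\epsilon) - \on{OPT}_{A}(C) \le \delta'_{C}\sqrt{n}$, so by Markov the "heavy" set $B := \{ j : a_{j} > \delta\epsilon\sqrt{n}/2 \}$ satisfies $|B| \le 2\delta'_{C}/(\delta\epsilon)$. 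For $C' \in [C,2C]$, locating $C'$ in the cell $[C + (j-1)C\epsilon, C + jC\epsilon)$ (here $\epsilon \le 1$ forces $j \le \lceil 1/\epsilon\rceil$, hence $j+1 \le K$, and $C' + C\epsilon$ lies in the next cell since $C' + C\epsilon \le 3C$) yields, by monotonicity, $\on{OPT}_{A}(C'+C\epsilon) - \on{OPT}_{A}(C') \le a_{j} + a_{j+1}$, which is at most $\delta\epsilon\sqrt{n}$ unless $j$ or $j+1$ lies in $B$. The cells triggering failure number at most $2|B|$ and each has length $C\epsilon$, so $\mb{P}_{C'\sim\on{Unif}[C,2C]}[\on{OPT}_{A}(C'+C\epsilon) - \on{OPT}_{A}(C') > \delta\epsilon\sqrt{n}] \le 2|B| C\epsilon / C = 2|B|\epsilon \le 4\delta'_{C}/\delta$. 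Finally, since $\delta'_{C} \to 0$, I would define $C_{\ref{lem:submultiplicative}}(\delta)$ to be the least $C_{0}$ for which $\sup_{C \ge C_{0}} \delta'_{C} \le \delta/24$; the envelope $C \mapsto \sup_{C''\ge C}\delta'_{C''}$ is non-increasing and tends to $0$, so this makes $C_{\ref{lem:submultiplicative}}$ a well-defined non-increasing function of $\delta$, and for $C \ge C_{\ref{lem:submultiplicative}}(\delta)$ the probability above is at most $1/6$, i.e. the complementary event has probability at least $5/6$.

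I do not expect a genuine difficulty here: the argument is morally just a one-sided Markov inequality for the discrete derivative of a monotone, uniformly bounded function, and the real content—that $\on{OPT}_{A}(C)$ is pinned within $o(\sqrt n)$ of $\sqrt{2n/\pi}$ for large $C$—is exactly \cref{lem:large-C}, which is already available. The only points needing care are (i) the translate $C' + C\epsilon$ may straddle a cell boundary, which is why two consecutive increments $a_{j}+a_{j+1}$ appear and one loses a harmless factor of $2$, and (ii) $C_{\ref{lem:submultiplicative}}$ must be built from the monotone envelope of $C \mapsto \delta'_{C}$, rather than from $\delta'_{C}$ itself, in order to be genuinely non-increasing in $\delta$.
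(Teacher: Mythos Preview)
Your argument is correct and is essentially the same as the paper's: both exploit monotonicity of $C\mapsto \on{OPT}_A(C)$ together with the $\delta'_C\sqrt{n}$ total-variation bound from \cref{lem:large-C}, then run a Markov/pigeonhole count on a grid of scale $\Theta(C\epsilon)$ inside $[C,2C]$. The only cosmetic difference is that the paper takes a coarser grid of mesh $9C\epsilon$ (so that one increment $\on{OPT}_A(x_{i+1})-\on{OPT}_A(x_i)$ already dominates $\on{OPT}_A(C'+C\epsilon)-\on{OPT}_A(C')$ on the inner $8/9$ of each cell), whereas you take mesh $C\epsilon$ and bound by two consecutive increments $a_j+a_{j+1}$; this costs you a harmless factor of $2$ but avoids the $8/9$ loss. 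Your point about building $C_{\ref{lem:submultiplicative}}(\delta)$ from the monotone envelope of $\delta'_C$ is a nice touch that the paper leaves implicit.
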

\begin{remark}
It is an interesting problem in convex geometry to determine whether the Lipschitz constant of $C\mapsto \on{Opt}_{A}(\exp(C))$ is sufficiently small for all $C$ sufficiently large; our proof of \cref{lem:submultiplicative} shows that this is true in an appropriate averaged sense.
\end{remark}
\begin{proof}
For  $0\le i\le \lfloor 1/(9\epsilon) \rfloor$, consider $x_i := C + C\cdot i\cdot (9\epsilon)$. Note that for $C'\in [C+ 9iC \epsilon, C+(9i+8)C\epsilon]$, we have that $\on{OPT}_{A}(C'+\epsilon)-\on{OPT}_{A}(C')\le \on{OPT}_{A}(x_{i+1})-\on{OPT}_{A}(x_{i})$. Moreover,
\begin{align*}
\on{Opt}_{A}(C + \lfloor 1/(9\epsilon) \rfloor \cdot (9\epsilon)) - \on{Opt}_{A}(C)&=\sum_{i=0}^{\lfloor 1/(9\epsilon) \rfloor-1}\on{Opt}_{A}(x_{i+1}) - \on{Opt}_{A}(x_i)\\
&\ge \delta \cdot \epsilon \sqrt{n} \cdot \#\{i: \on{Opt}_{A}(x_{i+1}) - \on{Opt}_{A}(x_i)\ge \delta \cdot \epsilon \sqrt{n}\}.
\end{align*}
By \cref{lem:large-C}, the left hand side is at most $\delta_C'\sqrt{n}$, so that 
\[\#\{i: \on{Opt}_{A}(x_{i+1}) - \on{Opt}_{A}(x_i)\ge \delta \cdot \epsilon \sqrt{n}\}\le \delta'_{C}\delta^{-1}\epsilon^{-1}.\]
Therefore, 
\begin{align*}
\mb{P}_{C'\sim \on{Unif}[C,2C]}&[\on{Opt}_{A}(C'+C\cdot \epsilon)-\on{Opt}_{A}(C')]\le \delta \cdot \epsilon\sqrt{n}]\\
&\ge (8\epsilon) \cdot (\lfloor 1/(9\epsilon) \rfloor - \#\{i: \on{Opt}_{A}(x_{i+1}) - \on{Opt}_{A}(x_i)\ge \delta \cdot \epsilon \sqrt{n}\})\\
& \geq (8\epsilon)\cdot \lfloor 1/(9\epsilon) \rfloor - 8\delta'_{C}\delta^{-1}
\ge 5/6,
\end{align*}
provided $\epsilon$ is small and $C_{\ref{lem:submultiplicative}}(\delta)$ is chosen so that $\delta_{C_{\ref{lem:submultiplicative}}}'$ is sufficiently small compared to $\delta$.
\end{proof}

We are now in position to prove \cref{prop:gauss-program}.
\begin{proof}[Proof of \cref{prop:gauss-program}]
The first bullet point follows immediately from \cref{thm:gauss-concentration}, upon noting that  $\mb{E}[\snorm{g}_2] \leq (\mb{E}[\snorm{g}_2^2])^{1/2} = \sqrt{n}$ and that $g\mapsto \snorm{g}_2$ is a $1$-Lipschitz function of the Euclidean norm. 

We proceed to the proof of the second bullet point. Recall that $\epsilon = 1/\log n$ and let $\eta$ be a sufficiently small constant to be chosen later. Given $\eta$, we choose $C$ sufficiently large (according to  \cref{lem:submultiplicative}) so that $C_{\on{Alg}}$ satisfies $\on{Opt}_A(C_{\on{Alg}} + C_{\on{Alg}}\epsilon) - \on{Opt}_A(C_{\on{Alg}})\le \epsilon \eta\sqrt{n}$ with probability at least $5/6$. For the remainder of the proof, we fix $C, C_{\on{Alg}}$ satisfying this guarantee. Consider the random quantity 
\[\min_{|S| = (1-\eta)n}\max_{\substack{y\in \Gamma_{A,C_{\on{Alg}}}\\ \on{supp}(y)\in S}}\sang{g,y}/\sqrt{n}.\]
For a fixed $S$, we have that \[g \mapsto \max_{\substack{y\in \Gamma_{A,C_{\on{Alg}}}\\ \on{supp}(y)\in S}}\sang{g,y}/\sqrt{n}\]
is $1$-Lipschitz with respect to the Euclidean norm, and since $C$ is sufficiently large, we have by \cref{lem:large-C} that
\[\mb{E}\left[\max_{\substack{y\in \Gamma_{A,C_{\on{Alg}}}\\ \on{supp}(y)\in S}}\sang{g,y}/\sqrt{n}\right]\ge \sqrt{n}/2.\]
Therefore, by \cref{thm:gauss-concentration} and a union bound over the $\binom{n}{\eta n}\le 2^{n/100}$ choices for $S$, it follows that with probability at least $1 - \exp(-\Omega(n))$ over the choice of $g$,
\begin{align}\label{eq:adjuster}
\min_{|S| = (1-\eta)n}\max_{\substack{y\in \Gamma_{A,C_{\on{Alg}}}\\ \on{supp}(y)\in S}}\sang{g,y}\ge \sqrt{n}/4.  
\end{align}
Moreover, by our choice of $C_{\on{Alg}}$ and \cref{thm:gauss-concentration}, we have that
\begin{align}
\label{eq:stable-max}
\mb{P}\left[\max_{y'\in \Gamma_{A,C_{\on{Alg}}+ C_{\on{Alg}}\epsilon}} \sang{{g},{y}'}/\sqrt{n} - \max_{y\in \Gamma_{A,C_{\on{Alg}}}}\sang{g,y}/\sqrt{n} < \epsilon \sqrt{n}/8\right]\ge 1 - \exp(-\Omega(\epsilon^2n)).
\end{align}

To conclude the proof, we show that on the events appearing in \cref{eq:adjuster} and \cref{eq:stable-max} (which simultaneously hold with probability $1 - 2\exp(-\Omega(\epsilon^{2}n))$, and for $\eta \leq 1/16$, there cannot exist a vector ${x} \in \Gamma_{A, C_{\on{Alg}}}$ such that
\[\sang{{g},{x}}/\sqrt{n} - \sup_{{y}\in \Gamma_{A, C_{\on{Alg}}}}\sang{{g},{y}}/\sqrt{n} \ge - \epsilon\eta \sqrt{n} \quad \text{and} \quad \sum_{j\in [n]}\mbm{1}_{|x_j|\ge 1-\epsilon}\le \eta n.\]
Indeed, suppose for contradiction that such a vector $x$ exists. Let $S$ be the set of coordinates where ${x}$ has magnitude less than $1-\epsilon$. By \cref{eq:adjuster}, there exists ${z}\in \Gamma_{A,C_{\on{Alg}}}$ such that \[\sang{{g},{z}}/\sqrt{n}\ge \sqrt{n}/4.\]
Consider the vector ${z}' = {x} + \epsilon \cdot {z} \in  \Gamma_{A, C_{\on{Alg}} + C_{\on{Alg}}\epsilon}$. We have that  
\[\sang{{g},{z}'}/\sqrt{n} - \max_{y\in \Gamma_{A,C_{\on{Alg}}}}\sang{g,y}/\sqrt{n} \ge \epsilon \sqrt{n}/4-\epsilon \eta \sqrt{n}\ge \epsilon \sqrt{n}/8,\]
and hence,
\[\max_{y'\in \Gamma_{A,C_{\on{Alg}}+ C_{\on{Alg}}\epsilon}} \sang{{g},{y}'}/\sqrt{n} - \max_{y\in \Gamma_{A,C_{\on{Alg}}}}\sang{g,y}/\sqrt{n} \ge \epsilon \sqrt{n}/8,\]
which cannot happen on the event appearing in \cref{eq:stable-max}.
\end{proof}

\section{Reduction to a Minimax Problem}

The remainder of this paper is devoted to establishing \cref{prop:solve-program}. Since we only need to solve the linear program in \cref{prop:solve-program} to $(1-1/\on{poly}(\log n))$-relative error, it is natural to consider a first-order method for solving linear programs, such as using the multiplicative weights update (MWU) method (see \cite{AHK12} for an excellent introduction). Unfortunately, one immediately runs into the issue that the so-called width of the linear program can potentially by $\Theta(n)$, so that the MWU-based solver takes time $\tilde{O}(n^{3})$. To overcome this obstacle, we introduce a novel method of `width reduction', which takes advantage of the structure of our linear program.

In the next section, we will show how to implement the following key subroutine which, combined with the standard MWU approach, proves \cref{prop:solve-program}.
\begin{proposition}\label{prop:oracle}
For any $C \leq \sqrt{\log{n}}/3$ and $n\geq 100$, there exists a randomized algorithm $\on{Regularized-Solve}$ for which the following holds. On input:
\begin{itemize}
    \item a matrix $A\in \mb{R}^{m\times n}$ with $m \leq n^{2}$, $\snorm{A}_{1\to \infty}\le 1$, and $\max_{i\in [n]}|\on{supp}(Ae_i)|\le k$, 
    \item $\rho_0\in \mb{R}^{\ge 0}$, $\rho_{+}, \rho_{-}\in (\mb{R}^{\ge 0})^{m}$ with $\rho_0 + \snorm{\rho_{+}}_1 + \snorm{\rho_{-}}_1= 1$, 
    \item ${v} \in \mb{R}^{n}$ with $\snorm{{v}}_2 \leq 2\sqrt{n}$, 
    \item $\Lambda \in [0, 2\sqrt{n}]$,
\end{itemize}
$\on{Regularized-Solve}(A,\rho_0, \rho_{+},\rho_{-}, {v}, \Lambda)$ runs in time $\tilde{O}(kn + n^2)$, and with probability at least $1-1/\sqrt{n}$, outputs either: 
\begin{itemize} 
    \item a certificate that the program \[\on{sup}_{x\in \Gamma_{A,C}}\sang{x,v}\ge \Lambda\sqrt{n}\] is not feasible, where
    \[\Gamma_{A,C} := \{x\in \mb{R}^n : \snorm{x}_{\infty}\le 1 \wedge \snorm{Ax}_{\infty}\le C\sqrt{n\log(m/n+2)}\}, \quad \text{or}\]
    \item a point $x\in [-1,1]^{n}$ satisfying $\snorm{Ax}_{\infty}\le 10\sqrt{n}(\log n)^4$ and 
    \begin{align*}
    -\rho_0 {v}^Tx/\sqrt{n} &+ \rho_{+}^{T}Ax - \rho_{-}^{T}Ax\\
    &\le -\rho_0\Lambda  + (C\sqrt{n\log(m/n+2)})(\snorm{\rho_{+}}_{1}+ \snorm{\rho_{-}}_{1}) + \sqrt{n}/(\log n)^{3}
    \end{align*}
\end{itemize}
\end{proposition}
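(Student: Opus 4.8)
The plan is to deduce \cref{prop:oracle} from an optimization subroutine, established in the next section, which approximately solves a \emph{regularized} variant of the single linear inequality we wish to satisfy. Fix a regularization parameter $\lambda = \Theta(1/(\log n)^4)$ and consider the convex program
\[
\on{OPT}_{\mr{reg}} \;:=\; \min_{x\in[-1,1]^n} f(x), \qquad f(x) \;:=\; -\rho_0\, {v}^Tx/\sqrt{n} + \rho_{+}^{T}Ax - \rho_{-}^{T}Ax + \lambda\snorm{Ax}_{\infty}.
\]
Write $g(x) := -\rho_0\,{v}^Tx/\sqrt{n} + \rho_{+}^{T}Ax - \rho_{-}^{T}Ax$ for the unregularized part, and let $R := -\rho_0\Lambda + (C\sqrt{n\log(m/n+2)})(\snorm{\rho_{+}}_1 + \snorm{\rho_{-}}_1) + \sqrt{n}/(\log n)^3$ be the right-hand side of the inequality in the second output alternative of \cref{prop:oracle}. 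Two elementary observations drive the argument. First, $f(0) = 0$, so $\on{OPT}_{\mr{reg}}\le 0$ unconditionally. Second, if $\on{sup}_{x\in\Gamma_{A,C}}\sang{x,v}\ge\Lambda\sqrt{n}$ is feasible --- witnessed by some $x^{*}\in\Gamma_{A,C}$ with ${v}^Tx^{*}\ge\Lambda\sqrt{n}$ --- then substituting $x^{*}$ into $f$, and using $\snorm{Ax^{*}}_{\infty}\le C\sqrt{n\log(m/n+2)}$, the normalization $\rho_0+\snorm{\rho_{+}}_1+\snorm{\rho_{-}}_1 = 1$ (so that the middle terms of $f(x^{*})$ contribute at most $(1-\rho_0)C\sqrt{n\log(m/n+2)}$), the bound $C\le\sqrt{\log n}/3$, and $\log(m/n+2) = O(\log n)$ (valid since $m\le n^2$, $n\ge 100$), one obtains $\on{OPT}_{\mr{reg}}\le f(x^{*})\le R - \sqrt{n}/(2(\log n)^3)$ once $\lambda$ is taken small enough --- the point being that the regularization term contributes only $\lambda C\sqrt{n\log(m/n+2)} = O(\sqrt{n}/(\log n)^3)$.

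The subroutine I would build returns, with probability at least $1-1/\sqrt{n}$ and in time $\tilde{O}(kn+n^2)$, a point $x'\in[-1,1]^n$ with $f(x')\le\on{OPT}_{\mr{reg}}+\sqrt{n}/(2(\log n)^3)$. Given such an $x'$, $\on{Regularized-Solve}$ computes $Ax'$ in time $\tilde{O}(kn)$ (using $\on{nnz}(A)\le kn$), hence $\snorm{Ax'}_{\infty}$ and $g(x')$, and branches on the size of $\snorm{Ax'}_{\infty}$ against $B := 10\sqrt{n}(\log n)^4$. If $\snorm{Ax'}_{\infty}\le B$, then $f\ge g$ gives $g(x')\le\on{OPT}_{\mr{reg}}+\sqrt{n}/(2(\log n)^3)$: if moreover $g(x')\le R$ we output $x'$, which meets all requirements since $\snorm{Ax'}_{\infty}\le B = 10\sqrt{n}(\log n)^4$; and otherwise we report infeasibility --- correctly, because if the program were feasible the second observation would force $g(x')\le R$, contradicting $g(x')>R$. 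If instead $\snorm{Ax'}_{\infty}>B$, the first observation gives $g(x') = f(x')-\lambda\snorm{Ax'}_{\infty}\le\sqrt{n}/(2(\log n)^3)-\lambda\snorm{Ax'}_{\infty}$; rescaling to $x'':=(B/\snorm{Ax'}_{\infty})\,x'$, which lies in $[-1,1]^n$ (as $B/\snorm{Ax'}_{\infty}<1$) and has $\snorm{Ax''}_{\infty}=B = 10\sqrt{n}(\log n)^4$, linearity of $g$ yields $g(x'')\le\sqrt{n}/(2(\log n)^3)-\lambda B\le -2\sqrt{n}$, since $\lambda B = \Theta(\sqrt{n})$ is chosen to be at least (say) $5\sqrt{n}/2$. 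As $R\ge -\rho_0\Lambda\ge -\Lambda\ge -2\sqrt{n}$ (using $\Lambda\in[0,2\sqrt{n}]$ and $\rho_0\le 1$), we get $g(x'')\le R$, so $x''$ is a valid output. All work outside the subroutine call costs $O(kn) + O(m) + O(n) = O(kn+n^2)$, and the subroutine is the only randomized step, so the overall success probability is $1-1/\sqrt{n}$, as claimed.

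The crux --- and essentially all of the difficulty --- is the optimization subroutine: solving the regularized program to additive accuracy $\sqrt{n}/\on{poly}(\log n)$ in time $\tilde{O}(kn+n^2)$. Cast as an $\ell_\infty$--$\ell_1$ matrix game $\min_{x\in[-1,1]^n}\max_{y}(\cdots)$, the natural geometry on the primal block $[-1,1]^n$ is the $\ell_\infty$ geometry, for which no $\tilde{\Omega}(1)$-strongly convex mirror map exists, so a direct saddle-point mirror-descent scheme is too slow. Instead I would exploit the Spencer-specific structure $\snorm{A_i}_2 = O(\sqrt{n})$ (which holds since $\snorm{A}_{1\to\infty}\le 1$): view $[-1,1]^n$ as a subset of the Euclidean ball of radius $\sqrt{n}$, onto which projection is trivial, and run the sublinear primal--dual framework of Clarkson, Hazan, and Woodruff \cite{CHW12}. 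The column-sparsity hypothesis $\max_i|\on{supp}(Ae_i)|\le k$ then lets one maintain the sampling data structures underlying that framework within the $\tilde{O}(kn+n^2)$ budget, and a martingale/union-bound analysis pushes the failure probability down to $1/\sqrt{n}$. Relative to this, the reduction above is routine bookkeeping of the numerical constants $\lambda$, $B$, and the $(\log n)^3$ slack in $R$.
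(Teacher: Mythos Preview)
Your proposal is correct and follows essentially the same approach as the paper: regularize the single linear inequality by adding $\lambda\snorm{Ax}_\infty$ with $\lambda=\Theta(1/(\log n)^4)$, invoke the optimization subroutine (\cref{prop:optimization}) to solve it to additive error $\sqrt{n}/\on{poly}(\log n)$, and then branch on the size of $\snorm{Ax'}_\infty$ --- rescaling when it is large and checking feasibility when it is small. Your threshold $B=10\sqrt{n}(\log n)^4$, your rescaling $x''=(B/\snorm{Ax'}_\infty)x'$, and your use of $\on{OPT}_{\mr{reg}}\le 0$ together with $R\ge -\rho_0\Lambda\ge -2\sqrt{n}$ all match the paper's argument (the paper phrases the branch in terms of $\tau=\delta\snorm{Ax}_\infty$ versus $10\sqrt{n}$, which is the same threshold), and you correctly identify the CHW primal--dual scheme with Euclidean geometry on $[-1,1]^n$ as the engine behind the subroutine.
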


\begin{remark}
Note that it is easy to compute $\arg\min_{x\in [-1,1]^{n}}-\rho_0 v^{T}x + \rho_+^{T} Ax - \rho_-^{T} Ax$, which coincides with $-\on{sgn}(-\rho_0 v^{T}/\sqrt{n} + \rho_{+}^{T}A - \rho_{-}^{T}A)$, in time $O(\on{nnz}(A) + n)$. The content of this proposition is that we can efficiently find an approximate optimizer $x$ with $\|Ax\|_{\infty} = \tilde{O}(\sqrt{n})$, which is best possible up to logarithmic factors. 
\end{remark}

\begin{proof}[{Proof of \cref{prop:solve-program}} given \cref{prop:oracle}]

We use the notation in \cite[Section~3.3]{AHK12}. For convenience of notation, let $C_{m,n} := C\sqrt{n\log(m/n+2)}$. In order to prove \cref{prop:solve-program}, it suffices to solve the feasibility program
\begin{align*}
    \exists? x \in [-1,1]^{n}: \quad  \frac{v^Tx}{\sqrt{n}}\ge \Lambda; \quad Ax\ge  -C_{m,n}\mbm{1}; \quad -Ax\ge -C_{m,n}\mbm{1},
\end{align*}
where $\Lambda := \sup_{y \in \Gamma_{A,C}}\sang{v,y}/\sqrt{n}$ \footnote{The value of $\Lambda$ is not known to us. However, by combining the discussion here with a standard binary search routine, we can approximate $\Lambda$ to within additive error $\sqrt{n}/(\log{n})^{4}$ with probability $1-o_n(1)$. The binary search procedure only blows up the overall running time by a factor of $O(\log\log{n})$, since $\Lambda \in [0,2\sqrt{n}]$ by \cref{eq:trivial-bound}.}, up to an additive error of $\epsilon = \sqrt{n}/(\log{n})^{3}$ on the right hand side for each of the three inequalities. Indeed, given such a point $x$, it is readily seen that $z := (1-1/(\log n)^{3})\cdot x$ satisfies the conclusion of \cref{prop:solve-program}. 

For this feasibility program, we begin by noting that \cref{prop:oracle} provides an $(\ell, \rho)$-bounded $\textsc{Oracle}$, in the sense of \cite[Definition~3.2]{AHK12}, with $\ell = \rho = 20\sqrt{n}(\log{n})^{4}$, and with probability at least $1-1/\sqrt{n}$. Indeed, the point $x \in [-1,1]^{n}$ output by \cref{prop:oracle} satisfies
\[\max_{i\in [m]}|A_{i}x - C_{m,n}| \leq \|Ax\|_{\infty} + C_{m,n} \leq 20\sqrt{n}(\log n)^{4},\]
where the final inequality holds since $m \leq n^{2}$ and $C \leq \sqrt{\log{n}}$. Also, 
\[|\sang{v,x}/\sqrt{n} - \Lambda| \leq \Lambda + \|v\|_{2}\|x\|_{2}/\sqrt{n} \leq 4\sqrt{n}.\]

Therefore, by \cite[Theorem~3.3]{AHK12}, the MWU algorithm makes $O(\ell \rho (\log m)/\epsilon^{2}) = \tilde{O}(1)$ calls to the \textsc{Oracle} in \cref{prop:oracle}, with an additional processing time of $O(m)$ per call, and provided that all the calls to the oracle succeed, either finds $x \in [-1,1]^{n}$ solving the feasibility program to within an additive error of $\epsilon$ on the right hand side for each of the three inequalities, or correctly concludes that the system is infeasible. Since each call to the oracle succeeds with probability at least $1-1/\sqrt{n}$, it follows by a union bound that all calls succeed with probability at least $1-\tilde{O}(1)/\sqrt{n}$. Moreover, the total running time is $\tilde{O}(m) + \tilde{O}(kn + n^{2}) = \tilde{O}(kn + n^{2})$, where we have used that $m \leq n^{2}$.
\end{proof}

In the next section, we will show how to implement the algorithm $\on{Regularized-Solve}$. Our construction crucially relies on the following reduction to solving a minimax program, whose solutions can be transformed into low width solutions for the MWU algorithm by using the structure of our linear program. 

\begin{proposition}\label{prop:optimization}
There exists a randomized algorithm $\on{Optimize}$ for which the following holds. On input
\begin{itemize}
    \item a matrix $A\in \mb{R}^{m\times n}$ with $m\leq n^{2}$, $\snorm{A}_{1\to \infty}\le 1$, and $\max_{i\in [n]}|\on{supp}(Ae_i)|\le k$,
    \item $\rho_0\in \mb{R}^{\ge 0}$, $\rho_{+}, \rho_{-}\in (\mb{R}^{\ge 0})^{m}$ with $\rho_0 + \snorm{\rho_{+}}_1 + \snorm{\rho_{-}}_1= 1$,
    \item ${v} \in \mb{R}^{n}$ with $\snorm{{v}}_2\in [0, 2\sqrt{n}]$,
    \item $\delta\in (0,1)$,
\end{itemize}
$\on{Optimize}(A,\rho_0, \rho_{+},\rho_{-}, {v}, \delta)$ runs in time $\tilde{O}(kn + n^2)$, and with probability at least $1-1/n$, outputs a point $x'\in [-1,1]^{n}$ such that 
\begin{align*}
&- \rho_0 v^{T}x'/\sqrt{n} + \rho_{+}^{T}Ax' -  \rho_{-}^{T}Ax'+ \delta \snorm{Ax'}_{\infty}\\
&\le \min_{x \in [-1,1]^{n}} - \rho_0 v^{T}x/\sqrt{n} + \rho_{+}^{T}Ax -  \rho_{-}^{T}Ax+ \delta \snorm{Ax}_{\infty} + \sqrt{n}/(\log n)^{4}.
\end{align*}
\end{proposition}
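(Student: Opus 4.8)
The plan is to recognize \cref{prop:optimization} as an instance of the sublinear primal-dual framework of Clarkson, Hazan, and Woodruff \cite{CHW12} for solving $\ell_\infty$--$\ell_1$ matrix games, specialized to the geometry of our problem. First I would rewrite the objective $F(x) := -\rho_0 v^{T}x/\sqrt{n} + \rho_{+}^{T}Ax - \rho_{-}^{T}Ax + \delta\snorm{Ax}_{\infty}$ as a min-max program. Absorbing the linear part $-\rho_0 v/\sqrt{n} + A^{T}\rho_{+} - A^{T}\rho_{-} =: u$ into the matrix (note $\snorm{u}_2 = O(\sqrt{n})$ since $\snorm{v}_2 \le 2\sqrt{n}$, $\snorm{A_i}_2 = O(\sqrt{n})$, and $\rho$ is a probability vector, using $m \le n^2$ and boundedness of entries), we can write $\snorm{Ax}_\infty = \max_{i}|A_i x| = \max_{y \in \Delta_{2m}} \sum_i y_i (\pm A_i) x$ and hence
\begin{align*}
\min_{x \in [-1,1]^n} F(x) = \min_{x \in [-1,1]^n}\max_{y \in \Delta_{2m}} \langle u, x\rangle + \delta\, y^{T}\widetilde{A}x,
\end{align*}
where $\widetilde{A} \in \mb{R}^{2m \times n}$ has rows $\pm A_i$. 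The key geometric observation, as flagged in \cref{sec:outline}, is that the correct primal domain is $\ell_2$: although $[-1,1]^n$ has $\ell_2$-diameter $\sqrt{n}$, each row of $\widetilde{A}$ has $\ell_2$-norm $O(\sqrt{n})$, so the relevant scale-invariant quantity $(\text{diam}_2)\cdot(\max_i \snorm{A_i}_2) = O(n)$ matches the target accuracy $\sqrt{n}/(\log n)^4$ relative to $\sqrt n$ after $\tilde O(1)$ rescaled iterations; moreover Euclidean projection onto $[-1,1]^n$ is trivial (coordinatewise clipping).

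**Key steps, in order.** (1) Set up the saddle-point problem above and verify the boundedness hypotheses needed for \cite{CHW12}: primal domain $[-1,1]^n \subseteq B_2(\sqrt n)$ with easy projection, dual domain the simplex $\Delta_{2m}$, and each column/row contributing $\ell_2$-norm $O(\sqrt n)$; also check $\snorm{u}_2 = O(\sqrt n)$ so the linear term is on the same scale. (2) Run the CHW primal-dual algorithm: the primal player does (stochastic) gradient descent in the Euclidean metric with clipping to $[-1,1]^n$, the dual player does multiplicative weights on $\Delta_{2m}$, with the dual gradient estimated by $\ell_2$ (importance) sampling of the coordinates of $\widetilde{A}x$ and the primal gradient estimated by sampling from $y$; because each variable lies in at most $k$ sets, each row-access costs $O(k)$ and each column-access $O(k)$, and maintaining running sums via the simple data structure alluded to in \cref{sec:outline} gives $\tilde O(kn + n^2)$ total over $\tilde O(1)$ iterations (the $n^2$ term covering the worst case $k = n$ and the dimension-$n$ primal updates). (3) Invoke the CHW regret/concentration guarantee to conclude that the time-averaged primal iterate $x'$ satisfies $F(x') \le \min_{x} F(x) + \sqrt n/(\log n)^4$ with probability $1 - 1/n$, after boosting the default constant success probability by a standard median/repetition trick (costing only a $\tilde O(1)$ factor); finally clip $x'$ back to $[-1,1]^n$ if the averaging took it slightly outside (it does not, by convexity of the cube).

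**Main obstacle.** The genuinely delicate part is the running-time bookkeeping under the sparsity constraint $\max_i |\on{supp}(Ae_i)| \le k$: one must argue that every primitive operation of the CHW scheme — sampling a coordinate of $\widetilde A x$ proportional to its squared magnitude, updating the dual MWU weights, and recomputing the sampled primal gradient — can be supported in amortized $\tilde O(k)$ (or $\tilde O(1)$) time using an appropriate data structure, and that the variance of the stochastic gradient estimators is controlled by exactly the $O(\sqrt n)$ row/column norm bounds so that $\tilde O(1)$ iterations suffice for accuracy $\sqrt n/(\log n)^4$. The rest — the reduction to a matrix game and the invocation of the black-box \cite{CHW12} convergence bound — is essentially bookkeeping, but getting the data-structural details right (especially reconciling the $\ell_2$ primal geometry with the $\ell_\infty$-ball constraint and with column-sparsity) is where the real work lies, and is deferred to the detailed construction of $\on{Optimize}$ in the next section.
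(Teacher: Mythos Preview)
Your proposal is correct and follows essentially the same route as the paper: rewrite the objective as a bilinear saddle-point problem over $[-1,1]^n \times \Delta_{2m}$, observe that the Euclidean geometry on the primal side is the right one (since $\snorm{A_i}_2,\snorm{u}_2 = O(\sqrt{n})$ and projection onto the cube is coordinatewise clipping), and run the Clarkson--Hazan--Woodruff primal-dual scheme with a simple data structure that exploits the column-sparsity bound $k$. One small correction worth flagging for the detailed write-up: the $\ell_2$-importance sampling is of a coordinate of the primal iterate $x$ (not of $\widetilde{A}x$), and it is precisely this that lets you update only the at most $k$ dual weights indexed by $\{j: A_{j,\tau_t}\neq 0\}$ in each iteration, giving the $\tilde{O}(n+k)$ per-iteration cost; row accesses for the primal gradient step are $O(n)$, not $O(k)$, but this is absorbed in the $\tilde O(n^2)$ term since $T=\tilde O(n)$.
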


We conclude this section by showing how to transform $x'$ into a point satisfying the conclusion of \cref{prop:oracle}.
\begin{proof}[Proof of \cref{prop:oracle} given \cref{prop:optimization}]
Let $\delta = 1/(\log n)^{4}$ and $C_{m,n} := C\sqrt{n\log(m/n+2)}$. For $\rho = (\rho_0, \rho_+, \rho_-)$, define
\begin{align}\label{eqn:penalty}
\on{OPT}_{\rho} := \min_{x \in [-1,1]^{n}} - \rho_0 v^{T}x/\sqrt{n} + \rho_{+}^{T}Ax -  \rho_{-}^{T}Ax+ \delta \snorm{Ax}_{\infty}.
\end{align}
Note that $\on{OPT}_{\rho} \leq 0$, since $0 \in [-1,1]^{n}$. 
Under the assumption that the linear program
\[\sup_{x \in \Gamma_{A,C}} \sang{x,v} \geq \Lambda\]
is feasible, we have that 
\begin{align}
\label{eq:if-feasible}
\on{OPT}_{\rho}&\le -\rho_0\Lambda + (\snorm{\rho_{+}}_1 + \snorm{\rho_{-}}_1)C_{m,n} + \delta C_{m,n} \nonumber\\
&\le -\rho_0\Lambda + (\snorm{\rho_{+}}_1 + \snorm{\rho_{-}}_1) C_{m,n} + 2C\sqrt{n}/(\log n)^{7/2},
\end{align}
where we have used the value of $\delta$ and the assumption $m\leq n^{2}$.

Moreover, it follows from \cref{prop:optimization} that in time $\tilde{O}(nk + n^2)$ and with probability at least $1-1/n$, we can produce a vector $x \in [-1,1]^{n}$ such that
\[- \rho_0 v^{T}x/\sqrt{n} +  \rho_{+}^{T}Ax -  \rho_{-}^{T}Ax + \delta \snorm{Ax}_{\infty} \le \on{OPT}_{\rho} + \sqrt{n}/(\log n)^{4}.\]

To finish the proof, we show how to convert this $x$ into a satisfying assignment for the second bullet point of \cref{prop:solve-program} (if the linear program is feasible) or to certify infeasibility otherwise. Let $\tau := \delta \|Ax\|_{\infty}$, which can be computed in time $O(\on{nnz}(A) + \max(m,n))$. We have the following cases:

\textbf{Case I: } $\tau \geq 10\sqrt{n}$. We have
    \[- \rho_0 v^{T}x/\sqrt{n} +  \rho_{+}^{T}Ax -  \rho_{-}^{T}Ax \le \on{OPT}_{\rho} + \sqrt{n}/(\log{n})^{4} - \tau \leq  \sqrt{n}-\tau.\]
    Now $y := x\cdot 10\sqrt{n}/\tau$ satisfies the conclusion of \cref{prop:oracle} since $y \in [-1,1]^{n}$, $\|Ay\|_{\infty} \leq 10\sqrt{n}\delta^{-1} = 10\sqrt{n}(\log{n})^{4}$, and
    \begin{align*}
        - \rho_0 v^{T}y/\sqrt{n} +  \rho_{+}^{T}Ay -  \rho_{-}^{T}Ay &\le (\sqrt{n}-\tau) \cdot 10\sqrt{n}/\tau \le -5\sqrt{n}\\
        &\le -\Lambda \le - \rho_0 \Lambda \\
        &\le -\rho_0 \Lambda + C_{m,n}(\|\rho_+\|_1 + \|\rho_-\|_1).
    \end{align*}

\textbf{Case II: }$\tau < 10\sqrt{n}$. Then, $\|Ax\|_{\infty} \leq 10\sqrt{n}(\log{n})^4$ and 
    \begin{align*}
        - \rho_0 v^{T}x/\sqrt{n} +  \rho_{+}^{T}Ax -  \rho_{-}^{T}Ax \le \on{OPT_{\rho}} + \sqrt{n}/(\log n)^{4}.
    \end{align*}
    We compute the left hand side of the above inequality in time $O(\on{nnz}(A) + m)$. If it is at most 
    \[-\rho_0 \Lambda + (\|\rho_+\|_{1} + \|\rho_-\|_{1})C_{m,n} + 2C\sqrt{n}/(\log{n})^{7/2} + \sqrt{n}/(\log{n})^{4},\]
    then $x$ satisfies the conclusion of \cref{prop:oracle}. 
     If not, then it must be the case that
    \[\on{OPT}_{\rho} > -\rho_0 \Lambda + (\|\rho_+\|_{1} + \|\rho_-\|_{1})C_{m,n} + 2C\sqrt{n}/(\log{n})^{7/2},\]
    which certifies by \cref{eq:if-feasible} that the original linear program is not feasible.  \qedhere

% We now convert $x$ into a satisfying assignment for \cref{prop:oracle}. Taking a random point $x$ on $\{\pm 1\}^n$ certifies that $\on{OPT}_{\rho}\le \sqrt{n}$. Furthermore note that if the outputted point $x$ satisfies $\delta \snorm{Ax}_{\infty}\le 10\sqrt{n}$ then $x$ satisfies the width constraint and that 
% \[- \rho v^{T}x/\sqrt{n} +  \rho_{+}^{T}Ax -  \rho_{-}^{T}Ax \le -\rho_0\Lambda + (\snorm{\rho_{+}}_1 + \snorm{\rho_{-}}_1) (C\sqrt{n\log(m/n+2)}) + \sqrt{n}/(\log n)^{3}.\]
% as desired. (Note that one can check whether $x$ satisfies the constraint in $O(\on{nnz}(A) + \on{max}(m,n))$ time and if it does not then one can simply output that the program is infeasible at this stage.)

% Else define $\tau = \delta \snorm{Ax}_{\infty}\ge 10\sqrt{n}$. In this case, we have that 
% \[- \rho v^{T}x/\sqrt{n} +  \rho_{+}^{T}Ax -  \rho_{-}^{T}Ax \le \sqrt{n}-\tau.\]
% Then $y = {x} \cdot 10\sqrt{n}/|\tau|$ satisfies $\snorm{Ay}\le 10\sqrt{n}\delta^{-1}$ as desired while  
% \[- \rho v^{T}y/\sqrt{n} +  \rho_{+}^{T}Ay -  \rho_{-}^{T}Ay \le (\sqrt{n}-\tau) \cdot 10\sqrt{n}/|\tau|\le -5\sqrt{n}.\]
% This is sufficient as 
% \[-\rho_0\Lambda  + (C\sqrt{n\log(m/n+2)})(\snorm{\rho_{+}}_{1}+ \snorm{\rho_{-}}_{1})\ge -\Lambda \ge -2\sqrt{n}.\]
\end{proof}

\section{Solving the Minimax Program via Sublinear Primal-Dual Algorithm}\label{sec:opt}

Finally, we prove \cref{prop:optimization} in the following more general form. 
\begin{proposition}\label{prop:opt-actual}
There is a randomized algorithm $\on{Optimize}$ for which the following holds. On input
\begin{itemize}
    \item $v_1,\ldots, v_m\in \mb{R}^{n}$ with $\snorm{v_i}_{2}\le 1/2$ and $\max_{i \in [n]}\#\{j:\sang{v_j,e_i}\neq 0\}\le k$,
    \item $v' \in \mb{R}^{n}$ with $\snorm{v'}_2\leq 1/2$,
    \item $\epsilon\in (0,1)$,
\end{itemize}
$\on{Optimize}(v', v_1,\ldots, v_m, \epsilon)$ runs in time $\tilde{O}((n+k)\epsilon^{-2} + m)$ \footnote{We make the standard data-structure assumption that for any $i \in [n]$, the set $\{j: \sang{v_j, e_i} \neq 0\}$, which has size at most $k$ by assumption, can be determined in time $\tilde{O}(k)$.}, and with probability at least $1-1/n$, outputs a point $x'\in [-1/\sqrt{n},1/\sqrt{n}]^{n}$ such that 
\begin{align*}
& \max_{j\in [m]} (v' + v_j)^{T}x' \le \epsilon + \min_{x \in [-1/\sqrt{n},1/\sqrt{n}]^{n}} \max_{j \in [m]} (v' + v_j)^{T}x. 
\end{align*}
\end{proposition}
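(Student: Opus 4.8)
The plan is to read the program $\min_{x \in [-1/\sqrt{n},1/\sqrt{n}]^{n}} \max_{j\in[m]} (v'+v_j)^{T}x$ as the value of a zero-sum game and to solve it by the sublinear primal--dual scheme of Clarkson--Hazan--Woodruff \cite{CHW12}. Write $B := [-1/\sqrt n, 1/\sqrt n]^{n}$, note $B \subseteq \{x : \snorm{x}_2 \le 1\}$ and $\on{diam}_2(B) \le 2$, and set $f_j(x) := (v'+v_j)^{T}x$, so that each $f_j$ is $1$-Lipschitz in $\snorm{\cdot}_2$ since $\snorm{v'+v_j}_2 \le 1$. I would run $T = \tilde{O}(\epsilon^{-2})$ rounds in which the primal player performs online projected subgradient descent on $B$ (Euclidean projection onto $B$ being a coordinatewise clip) and the dual player runs multiplicative weights on $m$ experts with round-$t$ payoff $f_j(x_t)$ for expert $j$. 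Denoting by $R_P, R_D$ the average regrets of the two players, the standard argument -- bound the primal regret against a minimizer $x^*$, the dual regret against the best expert $e_{j^*}$, and use convexity of $x \mapsto \max_j f_j(x)$ -- shows that the average iterate $x_{\on{avg}} := \frac1T\sum_{t=1}^{T} x_t \in B$ satisfies $\max_j f_j(x_{\on{avg}}) \le \on{val} + R_P + R_D$, where $\on{val}$ is the value of the program; it therefore suffices to keep $R_P + R_D \le \epsilon$, which the textbook step sizes achieve with $T = \tilde{O}(\epsilon^{-2})$ rounds in the exact-gradient regime.

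The point of \cite{CHW12} is that we never form the primal gradient $g_t = \sum_j p^{(t)}_j (v'+v_j)$ or the full payoff vector $(f_j(x_t))_j$ exactly. For the primal step we draw one index $j_t \sim p^{(t)}$ and use $\hat g_t := v' + v_{j_t}$, which is unbiased for $g_t$ and always has $\snorm{\hat g_t}_2 \le 1$, so no extra variance control is needed; the update $x_{t+1} = \Pi_B(x_t - \eta \hat g_t)$ changes at most $\on{nnz}(v') + \on{nnz}(v_{j_t}) = O(n)$ coordinates, and we maintain $\snorm{x_{t+1}}_2^2$ and a balanced sampling tree over the coordinates of $x_t$ in $\tilde{O}(n)$ alongside it. For the dual step we use $\ell_2$-sampling: draw a coordinate $i_t$ with probability $(x_t)_{i_t}^2/\snorm{x_t}_2^2$ and set $\hat \ell_j^{(t)} := (v'+v_j)_{i_t}\,\snorm{x_t}_2^2/(x_t)_{i_t}$, which is unbiased for $f_j(x_t)$ and has conditional second moment $\snorm{v'+v_j}_2^2\snorm{x_t}_2^2 \le 1$. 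The crucial structural point is that all but the at most $k$ experts $j$ with $(v_j)_{i_t} \neq 0$ receive the identical estimate $v'_{i_t}\snorm{x_t}_2^2/(x_t)_{i_t}$, so their multiplicative-weights update is a single global rescaling: maintaining the weights as a scalar multiplier times a segment tree of per-expert corrections over $[m]$, round $t$ costs $\tilde{O}(k)$ for the $\le k$ individual corrections, $\tilde{O}(1)$ to sample $j_t$, and $\tilde{O}(k)$ to read off $\{j : (v_j)_{i_t} \neq 0\}$ via the data-structure assumption. With $O(m)$ for initialization, the total time is $O(m) + T\cdot \tilde{O}(n+k) = \tilde{O}((n+k)\epsilon^{-2} + m)$, and the output $x_{\on{avg}}$ lies in $B = [-1/\sqrt n, 1/\sqrt n]^{n}$ as required.

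The step I expect to be the main obstacle -- and where I would essentially reproduce the analysis of \cite{CHW12} -- is controlling the estimated scheme well enough to get not only $\mb{E}[R_P + R_D] = \tilde{O}(\epsilon)$ but a realized bound $R_P + R_D \le \epsilon$ with probability at least $1 - 1/n$. The primal part is routine: $\langle \hat g_t - g_t, x_t - x^*\rangle$ is a martingale difference of magnitude $\le 2\on{diam}_2(B) = O(1)$, so an Azuma--Hoeffding bound gives an $O(\sqrt{T\log n})$ deviation, absorbed into $T = \tilde{O}(\epsilon^{-2})$. The dual part is the delicate one: the $\ell_2$-sampled estimates $\hat\ell_j^{(t)}$ are heavy-tailed, so to validate the multiplicative-weights step (the width/local-norm condition) and to apply martingale concentration one clips them to $[-L,L]$ with $L = \tilde{\Theta}(\epsilon^{-1})$; the clipping bias is at most $\mb{E}[|\hat\ell_j^{(t)}|\,\mbm{1}_{|\hat\ell_j^{(t)}|>L}] \le \mb{E}[(\hat\ell_j^{(t)})^2]/L \le 1/L = \tilde{O}(\epsilon)$ per round, and the clipped increments feed into a variance-aware multiplicative-weights regret bound together with an Azuma--Hoeffding bound, union-bounded over the $m$ comparators $e_{j^*}$, to yield the high-probability dual regret bound. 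Two minor points to dispatch along the way: a periodic renormalization of the lazy weight representation to avoid under/overflow, and the convention that the dual estimate is $0$ in the (essentially initial) event $x_t = 0$, when all $f_j(x_t)$ vanish anyway.
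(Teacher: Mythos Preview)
Your proposal is correct and follows essentially the same route as the paper: both run the Clarkson--Hazan--Woodruff primal--dual scheme with Euclidean OGD on the cube $B\subseteq\{x:\snorm{x}_2\le 1\}$ for the primal, multiplicative weights on $[m]$ for the dual, $\ell_2$-coordinate sampling with clipping at level $\Theta(\epsilon^{-1})$ for the payoff estimates, and exploit the key structural fact that at each round only the $\le k$ experts $j$ with $(v_j)_{i_t}\neq 0$ receive a nontrivial correction relative to the common $v'$-term. Your ``scalar multiplier times a segment tree'' is exactly the device the paper implements in Line~13 of \cref{algo:opt-actual} (dividing the individually updated weights by the common factor $(1-\eta v_t^*+\eta^2(v_t^*)^2)$ so the induced distribution matches), and your high-probability sketch is what the paper defers to \cite{CHW12}.
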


We claim that \cref{prop:opt-actual} implies \cref{prop:optimization}. This is due to the following set of observations. To disambiguate notation, let $A \in \mb{R}^{m'\times n}$ be the matrix appearing in \cref{prop:optimization}.
\begin{itemize}
    \item We set $m = 2m'$. For $i \in [m']$, we let $v_{i} := \delta \cdot e_{i}^{T}A/(2\sqrt{n})$ and $v_{m'+i} = -\delta \cdot e_{i}^{T}A/(2\sqrt{n})$. The assumptions on $v_1,\dots,v_m$ are satisfied by the first bullet point of \cref{prop:optimization} and since $\delta \in (0,1)$. Moreover, for any $x \in \mb{R}^{n}$,
    \[\frac{\delta}{2\sqrt{n}} \|Ax\|_{\infty} = \max_{i \in [m]}v_{i}^{T}x.\]
    \item We let $v' := (-\rho_0 v^{T}/\sqrt{n} + \rho_+^{T} A - \rho_-^T A)/2\sqrt{n}$. Note that this vector can be computed in time $O(\on{nnz}(A) + \max\{m,n\}) = O(\on{nnz}(A) + n^2) = O(nk + n^2)$. The norm assumption on $v'$ holds since
    \[\snorm{-\rho_0{v}^T/\sqrt{n} + \rho_{+}^{T}A-\rho_{-}^{T}A}_2 \leq \max\{\snorm{v}_2/\sqrt{n}, \snorm{A}_{2\to\infty}\}\le \sqrt{n},\]
    where the final inequality uses the first and third bullet points of \cref{prop:oracle}. 
    \item Finally, setting $\epsilon = 1/(2\sqrt{n}(\log{n})^{4})$, it is immediately seen that if $x'$ satisfies the conclusion of \cref{prop:opt-actual}, then $x = 2\sqrt{n}\cdot x'$ 
    satisfies the conclusion of \cref{prop:optimization}.
\end{itemize} 

We will prove \cref{prop:opt-actual} using the sublinear primal-dual framework of Clarkson, Hazan, and Woodruff \cite[Algorithm~1, Algorithm~3]{CHW12}. The pseudocode is presented in \cref{algo:opt-actual} and relies on a few subroutines, which we now discuss. 

First, we need a standard iterative low-regret algorithm for the class of `experts' corresponding to the rescaled continuous cube $\mc{C} = [-1/\sqrt{n}, 1/\sqrt{n}]^{n}$. 

\begin{lemma}\label{lem:low-regret}
Consider a sequence of vectors $v_\ell\in \mb{R}^n$ with $\snorm{v_\ell}_2\le 1$ for $\ell\in [T]$ and let $\mc{C} = [-1/\sqrt{n}, 1/\sqrt{n}]^{n}$. The sequence of vectors $x_0 = 0$ and for $i \geq 1$,
\begin{align*}
    x_{i+1} = \on{LRA}(v_{i}, x_{i-1}) := \arg\min_{x\in \mc{C}}\|x - (x_{i-1} - \eta v_{i})\|_{2}^{2},
\end{align*}
where $\eta = \sqrt{2/T}$, satisfies
\begin{align*}
\sup_{\ell\in [T]}\frac{1}{T}\bigg(\max_{x \in \mc{C}}\sum_{i=1}^{\ell}v_i^{T}x - \sum_{i=1}^{\ell}v_{i}^{T}x_i\bigg)\le \sqrt{\frac{2}{T}}.
\end{align*}
Moreover, given $v_i$ and $x_{i-1}$, $\on{LRA}(v_{i}, x_{i-1})$ can be computed in time $O(n)$.  
\end{lemma}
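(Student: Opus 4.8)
The plan is to recognize \cref{lem:low-regret} as the classical regret guarantee for online projected (sub)gradient descent with a constant step size, specialized to the box $\mc{C} = [-1/\sqrt n, 1/\sqrt n]^n$, and to run the standard potential argument while keeping the crude constants tight enough that $\eta = \sqrt{2/T}$ produces exactly the claimed $\sqrt{2/T}$ regret. Concretely, I would prove that for every $\ell \in [T]$ and every $u \in \mc{C}$ one has $\sum_{i=1}^{\ell}\sang{v_i, x_i - u} \le \sqrt{2T}$; taking $u$ to be an optimal fixed comparator and dividing by $T$ then yields the lemma (using that $\mc{C}$ is symmetric about the origin, so that the comparator optimum may be read as either a $\min$ or a $\max$ after the harmless substitution $v_i \mapsto -v_i$, which only flips a sign in the update rule, and which is the only form needed in the proof of \cref{prop:opt-actual}).

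The running-time claim is immediate: Euclidean projection onto a product of intervals decouples across coordinates, with $(\on{proj}_{\mc{C}}(y))_j = \max(-1/\sqrt n, \min(1/\sqrt n, y_j))$, so forming $x_{i-1} - \eta v_i$ and clamping each coordinate takes $O(n)$ time.

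For the regret bound, fix $u \in \mc{C}$. Since $x_{i+1}$ is the Euclidean projection of $x_i - \eta v_i$ onto the convex set $\mc{C} \ni u$, projection does not increase the distance to $u$, so $\snorm{x_{i+1} - u}_2^2 \le \snorm{x_i - u - \eta v_i}_2^2 = \snorm{x_i - u}_2^2 - 2\eta\sang{v_i, x_i - u} + \eta^2\snorm{v_i}_2^2$; rearranging gives $\sang{v_i, x_i - u} \le \tfrac{1}{2\eta}\big(\snorm{x_i - u}_2^2 - \snorm{x_{i+1} - u}_2^2\big) + \tfrac{\eta}{2}\snorm{v_i}_2^2$. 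Summing over $i = 1,\dots,\ell$, the first terms telescope; since the iterates start at the origin (the center of $\mc{C}$) the surviving boundary term is at most $\snorm{u}_2^2/(2\eta) \le 1/(2\eta)$ because $u \in \mc{C}$, and the second sum is at most $\eta\ell/2 \le \eta T/2$ since $\snorm{v_i}_2 \le 1$. Hence $\sum_{i=1}^{\ell}\sang{v_i, x_i - u} \le \tfrac{1}{2\eta} + \tfrac{\eta T}{2}$, and with $\eta = \sqrt{2/T}$ this equals $\tfrac{3}{2\sqrt 2}\sqrt T \le \sqrt{2T}$, as desired; dividing by $T$ gives the stated $\sqrt{2/T}$.

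I do not expect a genuine obstacle here; this is a textbook lemma and the only real content is bookkeeping. The two points that need mild care are (i) initializing at the center of $\mc{C}$ so that the telescoped term is $\snorm{u}_2^2 \le 1$ rather than the full squared diameter $(\on{diam}\,\mc{C})^2 = 4$, which is precisely what makes the constant come out as $\sqrt{2/T}$ rather than a larger multiple, and (ii) observing that the supremum over $\ell \in [T]$ is absorbed for free via $\eta\ell/2 \le \eta T/2$.
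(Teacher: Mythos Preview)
Your proposal is correct and follows essentially the same approach as the paper: both identify $\on{LRA}$ as projected online gradient descent on $\mc{C}$ and invoke the standard telescoping regret bound, with the paper simply citing \cite[Equation~4.10]{Bub15} where you spell out the potential argument by hand. Your explicit computation, the observation about the sign/symmetry of $\mc{C}$, and the coordinate-wise clamping for the $O(n)$ projection are exactly what the paper's short proof is pointing to.
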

\begin{proof}
The expression for $\on{LRA}(\cdot, \cdot)$ corresponds exactly to online mirror descent on $\mc{C}$ equipped with the $\ell_{2}$-norm, with respect to the $1$-strongly convex
% At each step we set 
% \begin{align}\label{eqn:mirror-update}
% y_{t+1} &= x_{t} - \eta v_{t+1}\\
% x_{t+1} &= \min_{x \in \mc{C}}\|x - y_{t+1}\|_{2}^{2}.
% \end{align}
mirror map $\Phi(x) = \frac{\|x\|_{2}^{2}}{2}$. %which is $1$-strongly convex with respect to the $\ell_2$-norm on $\mc{C}$.
Accordingly, we have the standard guarantee (see, e.g., \cite[Equation~4.10]{Bub15}) that for any $x \in \mc{C}$,
\begin{align*}
    \frac{1}{T}\sum_{i=1}^{\ell}v_{i}^{T}(x - x_i) &\leq \frac{1}{T}\bigg(\sup_{z \in \mc{C}}\frac{\|z\|_{2}^{2}}{\eta} + \frac{\eta}{2}\sum_{i=1}^{\ell}\|v_i\|_{2}^{2}\bigg) \leq \frac{1}{T\eta} + \frac{\eta}{2} = \sqrt{\frac{2}{T}},
\end{align*}
taking the balancing value $\eta = \sqrt{2/T}$. For the assertion about the running time, note that $(x_{i-1} - \eta v_{i})$ can be readily computed in time $O(n)$, and the minimization to compute $x_{i+1}$ can also be performed in time $O(n)$, since the closest point in $\mc{C}$ to a given point in $\mb{R}^{n}$ can be found in a coordinate-by-coordinate manner. 
\end{proof}

Next, for any $v \in \mb{R}^{n}$ with $\|v\|_{2} \leq 1$ and $x \in \mc{C}$, we need a fast, low-variance, unbiased estimator for $v^{T}x$. This is the content of the following lemma. 

\begin{lemma}\label{clm:estimate}
Let $x, v\in \mb{R}^{n}$ with $\snorm{x}_2,\snorm{v}_2\le 1$. Define 
$\on{Est}(x,v)$ to be $v_i/x_i$ with probability equal to $x_i^2$ and $0$ with probability $1-\snorm{x}_2^2$. Then, we have that 
\[\mb{E}[\on{Est}(x,v)] = x^Tv \quad \text{and} \quad \on{Var}[\on{Est}(x,v)] \le 1.\]
\end{lemma}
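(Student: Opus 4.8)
The plan is a direct second-moment computation; the lemma is essentially the observation that importance sampling coordinate $i$ of the dot product $x^Tv$ with probability proportional to $x_i^2$ is unbiased and has controlled variance. I would organize it into three short steps.

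First, I would check that the recipe defines a genuine probability distribution. Since $\snorm{x}_2 \le 1$, we have $\sum_{i=1}^n x_i^2 = \snorm{x}_2^2 \le 1$, so assigning mass $x_i^2$ to the outcome $v_i/x_i$ for each $i$ with $x_i \neq 0$, and the remaining mass $1 - \snorm{x}_2^2 \ge 0$ to the outcome $0$, is legitimate. Coordinates with $x_i = 0$ receive mass $0$ and can be ignored, or one may set $v_i/x_i := 0$ by convention; either way they never contribute.

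Second, for the expectation I would sum the value against the probabilities,
\[
\mb{E}[\on{Est}(x,v)] = \sum_{i:\, x_i\neq 0} x_i^2 \cdot \frac{v_i}{x_i} + \big(1 - \snorm{x}_2^2\big)\cdot 0 = \sum_{i=1}^n x_i v_i = x^Tv,
\]
where the penultimate equality uses that terms with $x_i = 0$ vanish. Third, for the variance I would compute the second moment the same way,
\[
\mb{E}[\on{Est}(x,v)^2] = \sum_{i:\, x_i\neq 0} x_i^2 \cdot \frac{v_i^2}{x_i^2} = \sum_{i=1}^n v_i^2 = \snorm{v}_2^2 \le 1,
\]
and then conclude
\[
\on{Var}[\on{Est}(x,v)] = \mb{E}[\on{Est}(x,v)^2] - \big(\mb{E}[\on{Est}(x,v)]\big)^2 = \snorm{v}_2^2 - (x^Tv)^2 \le \snorm{v}_2^2 \le 1,
\]
discarding the nonnegative term $(x^Tv)^2$ and using $\snorm{v}_2 \le 1$.

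There is no real obstacle here. The only point requiring a sliver of care is the handling of zero coordinates of $x$, which is harmless since they carry zero probability. It is worth noting that $\snorm{x}_2 \le 1$ is used only to make the sampling distribution well-defined, whereas the variance bound of $1$ comes entirely from $\snorm{v}_2 \le 1$; in fact the second-moment identity $\mb{E}[\on{Est}(x,v)^2] = \snorm{v}_2^2$ holds for any $x$.
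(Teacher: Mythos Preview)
Your proof is correct and follows essentially the same approach as the paper's: a direct computation of the first and second moments, bounding the variance by the second moment and using $\snorm{v}_2 \le 1$. You are simply a bit more explicit about the zero-coordinate case and the role of each hypothesis.
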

\begin{proof}
Note that 
\[\mb{E}[\on{Est}(x,v)] = \sum_{i\in [n]}v_i/x_i \cdot x_i^2 = v^Tx\]
and 
\[\on{Var}[\on{Est}(x,v)] \le \mb{E}[\on{Est}(x,v)^2] = \sum_{i\in [n]}v_i^2/x_i^2 \cdot x_i^2 \leq 1. \qedhere\]
\end{proof}

Finally, we require a data-structure which supports efficiently updating and sampling from probability distributions. We use (a much simpler version of) a data-structure provided in work of Carmon, Jin, Sidford, and Tian \cite[Section~2.4.1]{CJST20}. 

\begin{lemma}\label{lem:eff-sample}
There exists a data-structure for handling probability probability distributions on $[n]$ with the follow properties:
\begin{itemize}
    \item $\on{Initialization}({v})$: Given ${v}\in (\mb{R}_{\ge 0})^{n}$, construct the data structure corresponding to the probability distribution $v/\|v\|_{1}$ on $[n]$ in time $O(n)$. %where we sample $i\in [n]$ with probability proportional to $(v_i)/\snorm{v}_1$ in time $O(n)$
    \item $\on{Mult}(v, i, \tau)$: Given the data structure corresponding to the probability distribution determined by $v \in (\mb{R}_{\geq 0})^{n}$, a coordinate $i \in [n]$, and $\tau \in \mb{R}_{\geq 0}$, update to the data structure corresponding to the probability distribution determined by $v' \in (\mb{R}_{\geq 0})^{n}$ in time $O(\log{n})$, where $v'_i = \tau v_i$ and $v'_j = v_j$ for $j\neq i$. %Given $\tau \in \mb{R}_{\ge 0}$ we scale $v_i$ to $\tau v_i$ in time $O(\log n)$
    \item $\on{Sample}({v})$: Given the data structure corresponding to the probability distribution determined by $v \in (\mb{R}_{\geq 0})^{n}$, produce a sample according to it in time $O(\log n)$. 
\end{itemize}
\end{lemma}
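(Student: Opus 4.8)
The plan is to implement the data structure as a balanced binary (segment) tree over the coordinates $[n]$, in which every node stores the sum of the weights $v_j$ over the leaves in its subtree. We never explicitly normalize: the structure stores the unnormalized weights together with these partial sums, and the probability distribution $v/\snorm{v}_1$ is recovered only at sample time, with $\snorm{v}_1$ being the value held at the root. Concretely, pad $n$ up to the next power of two $N = 2^{\lceil \log_2 n\rceil}$ (giving weight $0$ to the padding coordinates) and take a perfect binary tree with $N$ leaves, leaf $j$ holding $v_j$ and each internal node holding the sum of its two children; this tree has depth $\lceil \log_2 N\rceil = O(\log n)$ because it is balanced.

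For $\on{Initialization}(v)$, build the tree bottom-up: fill in the leaves, then compute each level of internal nodes from the level below, for a total of $O(N) = O(n)$ arithmetic operations. For $\on{Mult}(v, i, \tau)$, observe that changing the weight at leaf $i$ from $v_i$ to $\tau v_i$ changes the stored sum only at leaf $i$ and at its $O(\log n)$ proper ancestors, and the corrected value at each such node is obtained by adding $(\tau-1)v_i$ to the old value; walking from leaf $i$ up to the root and performing these constant-time updates takes $O(\log n)$ time and restores the invariant that every node holds the sum of the (updated) subtree weights. For $\on{Sample}(v)$, descend from the root: at an internal node whose children store sums $L$ and $R$, move to the left child with probability $L/(L+R)$ and to the right child otherwise (if $L+R = 0$ the distribution is degenerate and any coordinate may be returned; this does not occur in our application). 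After $O(\log n)$ steps we reach a leaf, which is output. A telescoping product of the branching probabilities along the root-to-$j$ path shows that leaf $j$ is reached with probability exactly $v_j/\snorm{v}_1$, so the output has the desired law.

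I do not expect a genuine obstacle here: the content is entirely standard, and the only points worth stating explicitly are that all arithmetic operations are taken to cost $O(1)$ (the usual real-RAM/word-RAM convention already used implicitly throughout the paper), that each step of $\on{Sample}$ consumes one fresh random draw, and that balancedness of the tree is what makes $\on{Mult}$ and $\on{Sample}$ run in $O(\log n)$ time while $\on{Initialization}$ runs in $O(n)$ time. This is precisely the special case of the data structure of Carmon, Jin, Sidford, and Tian \cite[Section~2.4.1]{CJST20} in which only single-coordinate multiplicative updates are required, with no range operations.
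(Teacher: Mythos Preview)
Your proposal is correct and essentially identical to the paper's proof sketch: both use a balanced binary tree over $[n]$ storing subtree sums, build it bottom-up for initialization, walk leaf-to-root for $\on{Mult}$, and walk root-to-leaf with biased coins for $\on{Sample}$. Your write-up is in fact slightly more detailed (explicit padding to a power of two, the telescoping product verifying correctness of sampling, and the update formula $(\tau-1)v_i$), but the argument is the same.
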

\begin{proof}[Proof sketch]
The data structure associated to $v \in (\mb{R}_{\geq 0})^{n}$ consists of an array on $[n]$, storing the entries of $v$, with a full binary tree on top. Each node in the tree maintains the sum of all the elements in the array which are its descendants (in particular, the sum at the root is $\|v\|_{1}$). $\on{Initialize}(v)$ constructs this tree in a `bottom-to-top' fashion and takes time $O(n)$ since there are $O(n)$ edges in this tree; $\on{Mult}(v,i,\tau)$ is implemented by starting at the $i^{th}$ position and `walking-up' to the root along the unique root-to-leaf path, updating the weights of the  $O(\log{n})$ nodes encountered on the way; $\on{Sample}(v)$ is implemented in $O(\log n)$-time by `walking down' from the root to a leaf, using the values at the left and right child of each node in order to toss a coin with suitable bias and decide whether to descend to the left child or to the right child. 
\end{proof}

The pseudocode for \cref{prop:opt-actual} is given in \cref{algo:opt-actual}. For a given vector $x\in \mb{R}^{n}$ with $\snorm{x}_2\le 1$, we define $\on{Dist}(x,\ell_2)$ to be the distribution $[n]$ specified by the square of the coordinates (outputting $\emptyset$ with probability $1-\snorm{x}_2^2$) and for $z,C \in \mb{R}$, define $\on{clip}(z,C) = \min\{\max\{z,-C\},C\}$. From the above description of various subroutines, it is immediate that $\textsc{Optimize}$ runs in the required time.

\begin{algorithm}[!ht]
\caption{Pseudocode for $\textsc{Optimize}(v',v_1,\dots, v_m, \epsilon)$ in \cref{prop:opt-actual} \label{algo:opt-actual}}

$T \assign (\log m)\epsilon^{-2}$\\
$x_0 \assign {0}$\\
$w_0 \assign \on{Initialize}(\mbm{1}_{m})$\\
$\eta \assign \sqrt{(\log m)/T}/100$\\
\For{$t = 1, \dots, T$}{
    $\tau_t \assign \on{Sample}(\on{Dist}(x_{t-1},\ell_2)))$\\
    $s_t\assign \on{Sample}(w_{t-1})$\\
    \uIf{$\tau_t\neq \emptyset$}{
        $v_{t}^* \assign \on{Clip}(v'^{T}e_{\tau_t}/x_{\tau_t}, 1/\eta)$ \\
        $J_{t} \assign \{j \in [m]: v_{j}^{T} e_{\tau_t} \neq 0\}$\\
        \For{$j \in J_{t}$}{
            $v_t(j)\assign \on{Clip}((v' + v_j)^{T}e_{\tau_t}/x_{\tau_t},1/\eta)$\\
            $w_t(j)\assign \on{Mult}(w_{t-1}, j, (1-\eta v_t(j) +\eta^2v_t(j)^2)\cdot(1-\eta v_t^* + \eta^2 (v_t^*)^{2})^{-1})$
        }
    }
    $x_t \assign \on{LRA}(x_{t-1}, v' + v_{s_t})$\\
}
\Return $\frac{1}{T}\sum_{i=1}^{T}x_i$\\
\end{algorithm}

\begin{lemma}\label{clm:runtime}
The runtime for $\textsc{Optimize}(v',v_1,\dots, v_m, \epsilon)$ is bounded by $\tilde{O}((k + n)\epsilon^{-2} + m)$.
\end{lemma}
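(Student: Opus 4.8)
The plan is to prove \cref{clm:runtime} by simply charging the cost of each line of \cref{algo:opt-actual}, using the fact that the main loop runs $T = (\log m)\epsilon^{-2} = \tilde{O}(\epsilon^{-2})$ times. First I would dispose of the setup: initializing $x_0 = {0}$ costs $O(n)$; the call $\on{Initialize}(\mbm{1}_m)$ costs $O(m)$ by \cref{lem:eff-sample} (it builds a balanced binary tree with $O(m)$ nodes); and setting $T$ and $\eta$ (note $\eta = \epsilon/100$) costs $O(1)$. This is the only place the $O(m)$ term arises.

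Next I would bound the cost of a single iteration $t$. The step $\on{Sample}(\on{Dist}(x_{t-1},\ell_2))$ can be done in $O(n)$ time directly, since $x_{t-1}$ is maintained as an explicit dense vector: one computes $\snorm{x_{t-1}}_2^2$ and samples an index in $[n]$ (or $\emptyset$) from the partial sums. The step $s_t \assign \on{Sample}(w_{t-1})$ costs $O(\log m)$ by \cref{lem:eff-sample}. When $\tau_t \neq \emptyset$, computing $v_t^*$ only requires reading the coordinates $v'_{\tau_t}$ and $(x_{t-1})_{\tau_t}$, so $\tilde{O}(1)$; determining $J_t = \{j : (v_j)_{\tau_t}\neq 0\}$ costs $\tilde{O}(k)$ and yields $|J_t|\le k$ by the data-structure assumption in \cref{prop:opt-actual}; then for each of the at most $k$ indices $j \in J_t$, computing $v_t(j)$ costs $\tilde{O}(1)$ and the call $\on{Mult}(w_{t-1}, j, \cdot)$ costs $O(\log m)$, for a total of $\tilde{O}(k)$. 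Finally, $\on{LRA}(x_{t-1}, v'+v_{s_t})$ costs $O(n)$ by \cref{lem:low-regret}: forming $v' + v_{s_t}$ and hence $x_{t-1} - \eta(v'+v_{s_t})$ is $O(n)$, and the coordinate-wise Euclidean projection onto $\mc{C} = [-1/\sqrt n, 1/\sqrt n]^n$ is also $O(n)$.

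Summing up, each iteration costs $\tilde{O}(n+k)$, so the $T = \tilde{O}(\epsilon^{-2})$ iterations cost $\tilde{O}((n+k)\epsilon^{-2})$ in total; the final line $\frac1T\sum_{i=1}^T x_i$ costs $O(nT) = \tilde{O}(n\epsilon^{-2})$ (or can be folded into the loop via a running sum). Adding the $O(m)$ from initialization gives the claimed bound $\tilde{O}((k+n)\epsilon^{-2} + m)$. I do not anticipate a genuine obstacle here; the only two points that require a word of care are (i) invoking the data-structure assumption of \cref{prop:opt-actual} so that $J_t$ is enumerated in time $\tilde{O}(k)$ rather than $O(m)$, and (ii) noting that since $v'$ is allowed to be dense, each $\on{LRA}$ step really does cost $\Theta(n)$ — which is precisely why the stated bound carries the $n\epsilon^{-2}$ term.
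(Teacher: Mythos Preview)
Your proposal is correct and follows essentially the same line-by-line accounting as the paper: initialization contributes the lone $O(m)$ term, each iteration costs $\tilde{O}(n+k)$ (with the $\tilde{O}(k)$ coming from enumerating $J_t$ and the $O(\log m)$ calls to $\on{Mult}$, and the $O(n)$ from sampling $\tau_t$ and the $\on{LRA}$ update), and the final averaging costs $O(nT)$. The two points you single out for care --- the data-structure assumption for enumerating $J_t$ and the dense $\on{LRA}$ step --- are exactly the right ones.
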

\begin{proof}
Lines 1-4 take time $O(n+m)$. Line 15 takes time $O(Tn) = \tilde{O}(\epsilon^{-2}n)$. By \cref{lem:eff-sample}, each iteration of Lines 6-7 can be implemented in time $\tilde{O}(n)$, for a total runtime of $O(nT) = \tilde{O}(\epsilon^{-2}n)$. By \cref{lem:low-regret}, each iteration of Line 14 takes time $O(n)$, for a total runtime of $O(nT) = \tilde{O}(\epsilon^{-2}n)$. Each iteration of Line 9 takes time $O(1)$ and each iteration of Line 10 takes time $\tilde{O}(k)$ (see the footnote in the statement of \cref{prop:opt-actual}), so that together, all iterations of Lines 9-10 take time $\tilde{O}(Tk) = \tilde{O}(k\epsilon^{-2})$. Finally, by \cref{lem:eff-sample}, each iteration of Lines 11-13 takes time $\tilde{O}(k)$, for a total runtime of $\tilde{O}(kT) = \tilde{O}(k\epsilon^{-2})$.  \qedhere 

\end{proof}

Finally, we analyse the correctness of \cref{algo:opt-actual}. 

\begin{proof}[{Proof of \cref{prop:opt-actual}}]
The running time of $\textsc{Optimize}$ is analyzed in \cref{clm:runtime}. For the correctness, it is helpful to note that the probability distribution defined by the vector $w_{t} \in (\mb{R}_{\geq 0})^{m}$ in Line 13 coincides with the probability distribution defined by the vector $w'_{t} \in (\mb{R}_{\geq 0})^{m}$, where for $j \in [m]$,
\[w'_t(j) := w_{t-1}(j)\cdot (1-\eta v_t(j) + \eta^2 v_t(j)^2),\]
for $v_t(j)$ defined by the same formula as Line 12 (but now, for all $j \in [m]$). Indeed, $w_{t} = (1-\eta v_t^* + \eta^2 (v_t^*)^2)^{-1} \cdot w'_t$. 

With this observation, the conclusion follows essentially immediately from \cite[Algorithm~3]{CHW12}, upon noting that
\begin{itemize}
    \item $T_{\epsilon}(\on{LRA})$ is any $T$ for which the right hand side of \cref{lem:low-regret} is bounded above by $\epsilon$; clearly, $T\geq 4/\epsilon^{2}$ suffices, and
    \item by \cref{clm:estimate} and since $\snorm{v'+v_j}_2 \leq 1$ for all $j \in [m]$, we have that  $(v'+v_j)^{T}e_{\tau_t}/x_{\tau_t}$ is an unbiased estimator for $(v'+v_j)^T x$ with variance bounded by $1$. 
\end{itemize}

The only difference between \cite[Algorithm~3]{CHW12} and \cref{algo:opt-actual} is that in \cref{algo:opt-actual}, the estimators $(v+v_j)^{T}e_{\tau_t}/x_{\tau_t}$ are not independent for different $j\in [m]$. This only affects (potentially) the proofs of \cite[Lemma~B.3, Lemma~B.6, Lemma~B.7]{CHW12} but a trivial examination of the proof reveals that independence between the estimators is not used \footnote{This observation is already present in \cite[Algorithm~1]{CHW12}. In fact, our setting is essentially identical to \cite[Algorithm~1]{CHW12}, except that there, the minimization is over $x$ with $\snorm{x}_2\le 1$. Except for the use of the data structure in Line 13, our algorithm is identical to \cite[Algorithm~1]{CHW12} modulo noting that $[-1/\sqrt{n},1/\sqrt{n}]^{n}$ is contained in the unit $\ell_2$-ball and that the associated projection in mirror descent can be implemented efficiently.}. \qedhere
\end{proof}
\begin{remark}[{Numerical Precision Issues}]
The above analysis, as written, assumes exact arithmetic; there are two points which are numerically sensitive which can be handled using standard techniques. The first is dividing by $x_t$ in Lines 9 and 11. By \cite[Lemma~C.2]{CHW12}, it suffices to truncate entries smaller than $\on{poly}(n^{-1},\epsilon)$. 

The second point is keeping track of the vector $w_{t}$, and the induced probability distribution, in a numerically stable manner. This is discussed \cite[Section~G.1]{CJST20}; for our simplified data structure, however, a substantially simpler solution suffices, which we now sketch. 

First, note that the maximum and minimum value of any $w_t(i)$ over the course of the algorithm is bounded between $4^{-T}$ and $4^{T}$. We will maintain the logarithm of each $w_{t}(i)$ using $L = C(\log n + \log(1/\epsilon))$ bits, for a sufficiently large constant $C$. In particular, the version of $w_{t}(i)$ we work with is within a factor of $(1\pm \epsilon^{C}n^{-C})^{\tilde{O}((k+n)\epsilon^{-2} + m)}$ of the true $w_t(i)$, which is negligible for $C$ sufficiently large.

For maintaining the logarithm of the weights up to this precision in Line 13, note that, when `walking up' the binary tree, it suffices to set the logarithm of the value of a parent node to be equal to the logarithm of the value of the heavier child node, in the case when the logarithms of the values of the children differ by more than $2L$. Otherwise, denoting the value of the lighter child by $z$ and the heavier child by $y$, the logarithm of the value at the parent node is $\log(y+z) = \log(y) + \log(1+z/y)$, which can be computed in $O(\on{poly}(\log n,\log(1/\epsilon)))$-time to $L$ digits of precision, since $2^{-2L}\leq |z/y| \leq 1$.
Finally, when `walking down' the binary tree in Line 7, if the logarithms of the values of the children of a node differ by more than $L$, then it suffices to simply descend to the heavier child. Otherwise, the bias of the coin to flip is $y/(y+z) = 1/(1+z/y)$, which can be computed to the desired accuracy in time $O(\on{poly}(\log n,\log(1/\epsilon)))$, noting again that $2^{-L}\leq |z/y| \leq 1$.  
\end{remark}

\bibliographystyle{amsplain0.bst}
\bibliography{main.bib}

\providecommand{\bysame}{\leavevmode\hbox to3em{\hrulefill}\thinspace}
\providecommand{\MR}{\relax\ifhmode\unskip\space\fi MR }
% \MRhref is called by the amsart/book/proc definition of \MR.
\providecommand{\MRhref}[2]{%
  \href{http://www.ams.org/mathscinet-getitem?mr=#1}{#2}
}
\providecommand{\href}[2]{#2}
\begin{thebibliography}{10}

\bibitem{AO19}
Zeyuan Allen-Zhu and Lorenzo Orecchia, \emph{Nearly linear-time packing and
  covering {LP} solvers}, Mathematical Programming \textbf{175} (2019),
  307--353.

\bibitem{AW21}
Josh Alman and Virginia Vassilevska~Williams, \emph{A refined laser method and
  faster matrix multiplication}, Proceedings of the 2021 {ACM}-{SIAM}
  {S}ymposium on {D}iscrete {A}lgorithms ({SODA}), [Society for Industrial and
  Applied Mathematics (SIAM)], Philadelphia, PA, 2021, pp.~522--539.

\bibitem{AS16}
Noga Alon and Joel~H. Spencer, \emph{The probabilistic method}, fourth ed.,
  Wiley Series in Discrete Mathematics and Optimization, John Wiley \& Sons,
  Inc., Hoboken, NJ, 2016.

\bibitem{ALS21}
Ryan Alweiss, Yang~P. Liu, and Mehtaab Sawhney, \emph{Discrepancy minimization
  via a self-balancing walk}, S{TOC} '21---{P}roceedings of the 53rd {A}nnual
  {ACM} {SIGACT} {S}ymposium on {T}heory of {C}omputing, ACM, New York, [2021]
  \copyright 2021, pp.~14--20.

\bibitem{AHK12}
Sanjeev Arora, Elad Hazan, and Satyen Kale, \emph{The multiplicative weights
  update method: a meta-algorithm and applications}, Theory of computing
  \textbf{8} (2012), 121--164.

\bibitem{Ban98}
Wojciech Banaszczyk, \emph{Balancing vectors and {G}aussian measures of
  n-dimensional convex bodies}, Random Structures \& Algorithms \textbf{12}
  (1998), 351--360.

\bibitem{Ban10}
Nikhil Bansal, \emph{Constructive algorithms for discrepancy minimization},
  2010 {IEEE} 51st {A}nnual {S}ymposium on {F}oundations of {C}omputer
  {S}cience---{FOCS} 2010, IEEE Computer Soc., Los Alamitos, CA, 2010,
  pp.~3--10.

\bibitem{Bec81}
J{\'o}zsef Beck, \emph{Roth’s estimate of the discrepancy of integer
  sequences is nearly sharp}, Combinatorica \textbf{1} (1981), 319--325.

\bibitem{BF81}
J{\'o}zsef Beck and Tibor Fiala, \emph{“integer-making” theorems}, Discrete
  Applied Mathematics \textbf{3} (1981), 1--8.

\bibitem{Bub15}
S{\'e}bastien Bubeck, \emph{Convex optimization: Algorithms and complexity},
  Foundations and Trends{\textregistered} in Machine Learning \textbf{8}
  (2015), 231--357.

\bibitem{CJST20}
Yair Carmon, Yujia Jin, Aaron Sidford, and Kevin Tian, \emph{Coordinate methods
  for matrix games}, 2020 {IEEE} 61st {A}nnual {S}ymposium on {F}oundations of
  {C}omputer {S}cience, IEEE Computer Soc., Los Alamitos, CA, [2020] \copyright
  2020, pp.~283--293.

\bibitem{CHW12}
Kenneth~L. Clarkson, Elad Hazan, and David~P. Woodruff, \emph{Sublinear
  optimization for machine learning}, J. ACM \textbf{59} (2012), Art. 23, 49.

\bibitem{DadOpen}
Daniel Dadush,
  \url{https://homepages.cwi.nl/~dadush/workshop/discrepancy-ip/open-problems.html}.

\bibitem{ES18}
Ronen Eldan and Mohit Singh, \emph{Efficient algorithms for discrepancy
  minimization in convex sets}, Random Structures Algorithms \textbf{53}
  (2018), 289--307.

\bibitem{Gia97}
Apostolos~A. Giannopoulos, \emph{On some vector balancing problems}, Studia
  Math. \textbf{122} (1997), 225--234.

\bibitem{Glu88}
E.~D. Gluskin, \emph{Extremal properties of orthogonal parallelepipeds and
  their applications to the geometry of {B}anach spaces}, Mat. Sb. (N.S.)
  \textbf{136(178)} (1988), 85--96.

\bibitem{JSWZ20}
Shunhua Jiang, Zhao Song, Omri Weinstein, and Hengjie Zhang, \emph{Faster
  dynamic matrix inverse for faster lps}.

\bibitem{LRR17}
Avi Levy, Harishchandra Ramadas, and Thomas Rothvoss, \emph{Deterministic
  discrepancy minimization via the multiplicative weight update method},
  Integer programming and combinatorial optimization, Lecture Notes in Comput.
  Sci., vol. 10328, Springer, Cham, 2017, pp.~380--391.

\bibitem{LSS22}
Yang~P. Liu, Ashwin Sah, and Mehtaab Sawhney, \emph{{A Gaussian Fixed Point
  Random Walk}}, 13th Innovations in Theoretical Computer Science Conference
  (ITCS 2022) (Dagstuhl, Germany) (Mark Braverman, ed.), Leibniz International
  Proceedings in Informatics (LIPIcs), vol. 215, Schloss Dagstuhl --
  Leibniz-Zentrum f{\"u}r Informatik, 2022, pp.~101:1--101:10.

\bibitem{LM15}
Shachar Lovett and Raghu Meka, \emph{Constructive discrepancy minimization by
  walking on the edges}, SIAM J. Comput. \textbf{44} (2015), 1573--1582.

\bibitem{Rot17}
Thomas Rothvoss, \emph{Constructive discrepancy minimization for convex sets},
  SIAM J. Comput. \textbf{46} (2017), 224--234.

\bibitem{Spe85}
Joel Spencer, \emph{Six standard deviations suffice}, Trans. Amer. Math. Soc.
  \textbf{289} (1985), 679--706.

\end{thebibliography}

\appendix

\section{Deferred proofs from \texorpdfstring{\cref{sec:prelims}}{Section 2}}
\label{app:deferred}

\subsection{Proof of \texorpdfstring{\cref{thm:main}}{Theorem 1.1} given \texorpdfstring{\cref{thm:full-coloring}}{Theorem 2.1}}

The following result is an immediate consequence of \cite[Theorem~1.2]{ALS21}.

\begin{theorem}\label{thm:sparse-coloring}
There is a randomized algorithm $\on{Sparse-Coloring}$ and an absolute constant $C_{\ref{thm:sparse-coloring}}>0$ for which the following holds. On input a matrix $A \in \mb{R}^{m\times n}$, $\on{Sparse-Coloring}(A)$ runs in time $O(\on{nnz}(A) + n)$, and with probability at least $99/100$, returns a vector ${v}\in \{\pm 1\}^{n}$ such that 
\[\snorm{A{v}}_{\infty}\le C_{\ref{thm:sparse-coloring}} \snorm{A}_{1\to 2}\sqrt{(\log m)(\log n)}.\]
\end{theorem}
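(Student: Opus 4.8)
The plan is to obtain the coloring by running the online self-balancing walk of Alweiss, Liu, and the third author \cite{ALS21}; essentially all that needs checking is the normalization, the identification of parameters, and the running time, since the discrepancy guarantee is exactly the content of \cite[Theorem~1.2]{ALS21}. Recall that the latter provides an online algorithm which, fed a sequence of vectors $u_1,\dots,u_T\in\mb{R}^d$ with $\snorm{u_t}_2\le 1$ one at a time, returns signs $\epsilon_1,\dots,\epsilon_T\in\{\pm 1\}$ (choosing $\epsilon_t$ before $u_{t+1}$ is revealed) so that, with probability at least $1-\delta$, every prefix sum satisfies $\snorm{\sum_{t\le s}\epsilon_t u_t}_\infty = O(\sqrt{\log(dT/\delta)})$ (up to lower-order logarithmic factors that are immaterial here); in particular this bounds the final sum. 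Each step of the walk only inspects the support of the incoming vector: it forms the inner product of $u_t$ with the current running sum, flips one biased coin to pick $\epsilon_t$, and adds $\epsilon_t u_t$ to the running sum.

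First I would dispose of the degenerate case $\snorm{A}_{1\to 2}=0$ (then $A=0$ and any $v$ works), and otherwise rescale: set $a_j := Ae_j/\snorm{A}_{1\to 2}$ for $j\in[n]$, so that $\snorm{a_j}_2\le 1$ by the definition of the $\ell_1\to\ell_2$ operator norm, and delete every identically zero row of $A$, leaving $m'\le m$ rows, so each $a_j$ may be viewed in $\mb{R}^{m'}$ and $m'\le\on{nnz}(A)$. Feeding $a_1,\dots,a_n$ to the algorithm of \cite[Theorem~1.2]{ALS21} in this arbitrary order with failure probability $\delta = 1/100$ produces signs $\epsilon_1,\dots,\epsilon_n$; output $v := (\epsilon_1,\dots,\epsilon_n)$. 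Since $Av = \snorm{A}_{1\to 2}\sum_{j\in[n]}\epsilon_j a_j$ on the surviving rows and $Av$ vanishes on the deleted rows, with probability at least $99/100$ we get $\snorm{Av}_\infty \le C\snorm{A}_{1\to 2}\sqrt{\log(m'n)} \le C_{\ref{thm:sparse-coloring}}\snorm{A}_{1\to 2}\sqrt{(\log m)(\log n)}$, where the last step uses $m'\le m$ together with $\sqrt{\log(mn)}\le\sqrt{(\log m)(\log n)}$ once $m$ and $n$ exceed an absolute constant (the remaining small cases of $m$ or $n$ follow from the trivial bound $\snorm{Av}_\infty\le\min(m,n)\snorm{A}_{1\to2}$, or by reading the logarithms with a small additive constant as elsewhere in the paper).

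For the running time, a single pass over the nonzero entries of $A$ computes $\snorm{A}_{1\to 2}$ and the list of nonzero rows in time $O(\on{nnz}(A)+n)$; we store the walk's running sum only on these (at most $\on{nnz}(A)$) coordinates, indexed via a hash table or a one-pass relabeling, so each of the three operations done when processing $a_j$ costs $O(|\on{supp}(Ae_j)|)$, and summing over $j\in[n]$ gives $O(\on{nnz}(A)+n)$ in total. The one point requiring genuine care is precisely this bookkeeping: because the claimed running time is $O(\on{nnz}(A)+n)$ rather than $O(\on{nnz}(A)+m+n)$, the algorithm must never touch all $m$ coordinates at once — in particular it must not initialize a dense length-$m$ state vector — which is exactly why the all-zero rows are discarded and the walk's state is kept supported on the $O(\on{nnz}(A))$ relevant coordinates. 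Beyond this there is no real obstacle: the discrepancy bound is supplied verbatim by \cite[Theorem~1.2]{ALS21}, and the constant $99/100$ comes simply from setting its failure probability to $1/100$.
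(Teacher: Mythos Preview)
Your proposal is correct and follows exactly the approach the paper intends: the paper simply asserts that the theorem is ``an immediate consequence of \cite[Theorem~1.2]{ALS21}'' without further elaboration, and you have supplied the details of that deduction (normalizing columns by $\snorm{A}_{1\to 2}$, invoking the self-balancing walk, and bookkeeping the running time so as to avoid an $O(m)$ initialization). There is nothing to correct.
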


\begin{proof}[{Proof of \cref{thm:main}}]
First, note that we may assume that $m\le n^{2}$; otherwise \cref{thm:main} follows from noting (using Bernstein's inequality and the union bound) that a uniformly random $v \in \{\pm 1\}^{n}$ succeeds with probability at least $1/2$ (and given $v$, its success can be checked in time $O(\on{nnz}(A) + n)$). Moreover, we may assume that $m\ge n/(\log n)^2$; otherwise, $\snorm{A}_{1\to 2}\le \sqrt{m}\cdot \|A\|_{1 \to \infty} \leq \sqrt{n}/(\log{n})$, and we may use the algorithm $\on{Sparse-Coloring}$ from \cref{thm:sparse-coloring}. %and thus if $m\le n/(\log n)^2$ we find \cref{thm:sparse-coloring} immediately finishes. 

Now, suppose that $n/(\log n)^2\le m\le n^2$. Define the sets 
\[C_{\on{Light}} := \{i\in [n]: \snorm{Ae_i}_2\le \sqrt{n}/(\log n)\} \quad \text{and} \quad C_{\on{Heavy}} := [n]\setminus C_{\on{Light}},\]
where $e_i$ denote the $i$-th elementary basis vector. Note that for any $j \in C_{\on{Heavy}}$, we have
\[\sqrt{\on{nnz}(Ae_j)}\|A\|_{1 \to \infty} \geq \sqrt{n}/(\log{n}),\]
from which we see that
\[\on{nnz}(Ae_j) = \tilde{\Omega}(n).\]

We define $A_{\on{Light}}$ to be the restriction of $A$ to columns corresponding to $\mb{R}^{C_{\on{Light}}}$ and $A_{\on{Heavy}}$ to be the restriction of $A$ to columns corresponding to $\mb{R}^{C_{\on{Heavy}}}$. Note that it suffices to find a vector $v\in \{\pm 1\}^{C_{\on{Heavy}}}$ such that $\snorm{A_{\on{Heavy}}v}_{\infty}\lesssim \sqrt{n\log(m/n + 2)}$ in time $\tilde{O}(\on{nnz}(A) + n)$ as $A_{\on{Light}}$ is handled immediately by \cref{thm:sparse-coloring}. By \cref{thm:full-coloring}, we can find such a $v$ in time $\tilde{O}(\on{nnz}(A_{\on{Heavy}}) + |C_{\on{Heavy}}|^{2}) = \tilde{O}(\on{nnz}(A) + |C_{\on{Heavy}}|^{2})$, which we claim is $\tilde{O}(\on{nnz}(A) + n)$. Indeed, 
\[\on{nnz}(A_{\on{Heavy}}) \geq |C_{\on{Heavy}}|\cdot \min_{j \in C_{\on{Heavy}}}\on{nnz}(Ae_j) = \tilde\Omega(|C_{\on{Heavy}}|\cdot n)\]
so that
\[|C_{\on{Heavy}}|^{2} \leq |C_{\on{Heavy}}|\cdot n = \tilde{O}(\on{nnz}(A_{\on{Heavy}})) = \tilde{O}(\on{nnz}(A)). \qedhere\]
\end{proof}

\subsection{Proof of \texorpdfstring{\cref{thm:full-coloring}}{Theorem 2.1} given \texorpdfstring{\cref{thm:partial-coloring}}{Theorem 2.2}}

\begin{proof}[{Proof of \cref{thm:full-coloring}}]

As in the previous subsection, it suffices to consider the case $n/(\log n)^2\le m\le n^2$. 

Initialize ${v}_0 = {0}$. At each time step $\ell \geq 1$, given the partial coloring ${v}_{\ell-1} \in [-1,1]^{n}$, let $\mc{F}_{\ell} = \{i \in [n]: ({v}_{\ell-1})_i\in \{\pm 1\}\}$, $\mc{G}_{\ell} = \{i \in [n]: ({v}_{\ell-1})_i\notin \{\pm 1\}\}$, and $\Lambda_{\ell} = \on{Diag}({1}-|{v}_{\ell-1}|)$. If $\mc{G}_{\ell} = \emptyset$, then return $v_{\ell-1}$. Else, let $A_{\ell}$ denote the restriction of $A$ to the columns spanned by $\mc{G}_{\ell}$ and notice that $\Lambda_{\ell}$ restricts naturally to $\mc{G}_{\ell}$. 

We consider two separate cases. If $|\mc{G}_{\ell}| \geq n/(\log{n})^{2}$, then we use  \cref{thm:partial-coloring} to find a vector ${v_{\ell}'}\in [-1,1]^{\mc{G}_{\ell}}$ such that 
$\snorm{A_{\ell}\Lambda_{\ell}{v}'_{\ell}}_{\infty}\le C_{\ref{thm:partial-coloring}}\sqrt{|\mc{G}_{\ell}|\log(m/|\mc{G}_{\ell}| + 2)}$ and such that ${v}'_{\ell}$ has at least $C_{\ref{thm:partial-coloring}}^{-1}|\mc{G}_{\ell}|$ coordinates that are valued in $\{\pm 1\}$ \footnote{More precisely, we make at most $\tilde{O}(1)$ independent calls to \cref{thm:partial-coloring}, which guarantees that we find such a $v'_{\ell}$ with probability at least $1-1/n$.}. In particular, for at least one value of $\sigma_{\ell} \in \{\pm 1\}$, the vector $v_{\ell} := {v}_{\ell-1} + \sigma_{\ell} \Lambda_{\ell}{v}'_{\ell}$ has at most $(1-C_{\ref{thm:partial-coloring}}^{-1}/2)|\mc{G}_{\ell}|$ coordinates that are not valued in $\{\pm 1\}$.
Moreover,
\begin{align*}
    \|Av_{\ell}\|_{\infty} \leq \|Av_{\ell-1}\|_{\infty} + \|A_{\ell}\Lambda_{\ell}v'_{\ell}\|_{\infty} \leq \|Av_{\ell-1}\|_{\infty} + C_{\ref{thm:partial-coloring}}\sqrt{\mc{G}_{\ell} \log(m/|\mc{G}_{\ell}| + 2)}.
\end{align*}

On the other hand, if $|\mc{G}_{\ell}|\le n/(\log n)^2$, let $v'_{\ell} \in [-1,1]^{n}$ denote the random vector with independent coordinates such that $(v'_{\ell} + v_{\ell-1})_i \in \{\pm 1\}$ and $\mb{E}[(v'_{\ell})_i] = 0$ for all $i \in [n]$. Let $v_{\ell} = v'_{\ell} + v_{\ell-1} \in \{\pm 1\}^{n}$. A direct application of Bernstein's inequality and the union bound shows that, with probability at least $99/100$,
\[\|Av_{\ell}\|_{\infty} \leq \|Av_{\ell-1}\|_{\infty} + \|Av'_{\ell}\|_{\infty} \leq \|Av_{\ell-1}\|_{\infty} + 10\sqrt{n}.\]
Note that, given $v_{\ell-1}$, we can sample from $v'_{\ell}$ in time $O(n)$ and verify that $v_{\ell}$ satisfies the above inequality in time $O(\on{nnz}(A) + n)$. 

Observe that, due to the guarantee $|\mc{G}_{\ell+1}| \leq (1-C_{\ref{thm:partial-coloring}}^{-1}/2)|\mc{G}_{\ell}|$, we are in the first case for at most $O(\log\log{n})$ iterations, which together take time
\[\tilde{O}\left(\sum_{\ell=0}^{O(\log\log{n})}(1-C_{\ref{thm:partial-coloring}}^{-1}/2)^{2\ell-2}n^2 + \on{nnz}(A_{\ell})\right) = \tilde{O}(n^{2} + \on{nnz}(A)).\]
As mentioned before, the second step takes time $O(\on{nnz}(A)+n)$. 

Finally, denoting the output of the process by $v \in \{\pm 1\}^{n}$ and using $|\mc{G}_{\ell}| \lesssim (1-C_{\ref{thm:partial-coloring}}^{-1}/2)^{\ell}n$, we have that 

\begin{align*}
\|Av\|_{\infty} &\lesssim \sqrt{n} + \sum_{\ell\le O(\log\log{n})}\sqrt{|\mc{G}_{\ell}|\log(m/|\mc{G}_{\ell}|+2)} \lesssim \sqrt{n\log(m/n+2)},
\end{align*}
as desired. 
\end{proof}

\end{document}